\newcommand{\ignore}[1]{}
\newcommand{\Oo}{\mathcal{O}}
\newcommand{\F}{\mathcal{F}}
\newcommand{\B}{\mathcal{B}}
\newcommand{\N}{\mathbb{N}}
\newcommand{\Z}{\mathbb{Z}}
\newcommand{\R}{\mathbb{R}}
\newcommand{\Q}{\mathbb{Q}}
\newcommand{\pre}{\textsc{pre}}
\newcommand{\post}{\textsc{post}}
\newcommand{\trans}[1]{\stackrel{#1}{\longrightarrow}}
\newcommand{\trg}{\textup{trg}}
\newcommand{\src}{\textup{src}}
\newcommand{\zero}{\textsc{zero}}
\newcommand{\norm}{\textup{norm}}
\newcommand{\supp}{\textup{supp}}
\newcommand{\trash}{\textup{trash}}
\newcommand{\proj}{\textup{proj}}
\newcommand{\reaches}{\longrightarrow}
\newcommand{\nreaches}{\centernot\longrightarrow}
\newcommand{\lin}{\textup{Lin}}
\newcommand{\inp}{\textup{in}}
\newcommand{\out}{\textup{out}}
\newcommand{\last}{\textup{last}}
\newcommand{\decr}{\textup{dec}}
\newcommand{\state}{\textup{state}}
\newcommand{\nemp}{{\neq\emptyset}}
\DeclareSymbolFont{extraup}{U}{zavm}{m}{n}
\DeclareMathSymbol{\varheart}{\mathalpha}{extraup}{86} 
\DeclareMathSymbol{\vardiamond}{\mathalpha}{extraup}{87} 
\newcommand{\kqhide}[1]{\textcolor{teal}{sth old is hidden here}}
\newcommand{\nl}{\textup{NL}\xspace}
\newcommand{\pspace}{\textup{PSpace}\xspace}
\newcommand{\expspace}{\textup{ExpSpace}\xspace}
\newcommand{\tower}{\textup{Tower}\xspace}
\newcommand{\ackermann}{\textup{Ackermann}\xspace}
\newcommand{\vr}[1]{#1}
\title{Long Runs Imply Big Separators in Vector Addition Systems}
\author{Wojciech Czerwi\'nski}{University of Warsaw}{wczerwin@mimuw.edu.pl}{0000-0002-6169-868X}{Supported by the ERC grant INFSYS, agreement no. 950398.}
\author{Adam J\k{e}drych}{University of Warsaw}{aj370982@students.mimuw.edu.pl}{}{}
\authorrunning{W. Czerwi\'nski and A. J\k{e}drych}
\keywords{Vector Addition Systems, reachability problem, separators, semilinear sets}
\begin{document}

\maketitle

\nolinenumbers

\begin{abstract}
Despite recent progress which settled the complexity of the reachability problem for Vector Addition Systems with States (VASSes)
as being Ackermann-complete we still lack much understanding for that problem.
A striking example is the reachability problem for three-dimensional
VASSes (3-VASSes): it is only known to be PSpace-hard and not known to be elementary.
One possible approach which turned out to be successful for many VASS subclasses is to prove that to check reachability it suffices to inspect only runs of some bounded length. This approach however has its limitations, it is usually hard to design an algorithm
substantially faster than the possible size of finite reachability sets in that VASS subclass. It motivates a search for other
techniques, which may be suitable for designing fast algorithms.
In 2010 Leroux has proven that non-reachability between two configurations implies separability of the source from the target
by some semilinear set, which is an inductive invariant. There can be a reasonable hope that it suffices to look for separators
of bounded size, which would deliver an efficient algorithm for VASS reachability. In the paper we show that
also this approach meets an obstacle: in VASSes fulfilling some rather natural conditions
existence of only long runs between some configurations implies
existence of only big separators between some other configurations (and in a slightly modified VASS). Additionally
we prove that a few known examples of involved VASSes fulfil the mentioned conditions. Therefore improving the complexity of the
reachability problem (for any subclass) using the separators approach may not be simpler than using the short run approach.
\end{abstract}

\section{Introduction}
The complexity of the reachability problem for Vector Addition Systems (VASes) was a challenging and natural problem
for a few decades of research. The first result about its complexity
was \expspace-hardness by Lipton in 1976~\cite{Lipton76}.
Decidability was shown by Mayr in 1981~\cite{DBLP:conf/stoc/Mayr81}.
Later the construction of Mayr was further simplified and presented in a bit different light
by Kosaraju~\cite{DBLP:conf/stoc/Kosaraju82} and Lambert~\cite{DBLP:journals/tcs/Lambert92}.
This construction is currently known under the name KLM decomposition after the three main inventors.
The first complexity upper bound was obtained by Leroux and Schmitz as recent as in 2016~\cite{DBLP:conf/lics/LerouxS15},
where the reachability problem was shown to be solvable in cubic-Ackermann time. A few years later
the same authors have shown that the problem can be solved in Ackermann time and actually in primitive
recursive time, when the dimension is fixed~\cite{DBLP:conf/lics/LerouxS19}.
At the same time a lower bound of \tower-hardness was established for the reachability
problem~\cite{DBLP:conf/stoc/CzerwinskiLLLM19,DBLP:journals/jacm/CzerwinskiLLLM21}. Last year two independent
papers have shown \ackermann-hardness of the reachability
problem~\cite{DBLP:conf/focs/Leroux21, DBLP:conf/focs/CzerwinskiO21}
thus settling the complexity of the problem to be \ackermann-complete.

However, despite this recent huge progress we still lack much understanding about the
reachability problem in VASSes. The most striking example is the problem for dimension three.
The best known lower bound for the reachability problem for 3-VASSes
is \pspace-hardness inherited from the hardness result for 2-VASSes~\cite{DBLP:conf/lics/BlondinFGHM15},
while the best known upper bound is much bigger then \tower. Concretely speaking the problem
can be solved in time $\F_7$~\cite{DBLP:conf/lics/LerouxS19}, where $\F_\alpha$ is the hierarchy
of fast-growing complexity classes, see~\cite{DBLP:journals/toct/Schmitz16} (recall that $\tower = \F_3$).
Similarly for other low dimensional VASSes the situation is unclear.
Currently the smallest dimension $d$ for which an \expspace-hardness
is known for $d$-VASSes is $d = 6$ and for which a \tower-hardness is known is $d = 8$~\cite{DBLP:journals/corr/abs-2203-04243}.
Therefore for any $d \in [3,5]$ the problem can be in \pspace and is not known to be elementary
and for $d \in [6,7]$ the problem can be elementary.
For some of those dimensions one can hope to get better hardness results in the future,
but we conjecture that for $3$-VASSes and $4$-VASSes the reachability problem actually is elementary
and the challenge is to find this algorithm.
This complexity gap is just one witness of our lack of understanding of the structure of VASSes.
One can hope that in the future we will be able to design efficient algorithms for the reachability problem
even for VASSes in high dimensions under the condition that they belong to some subclass, for example
they avoid some hard patterns. This can be quite important from a practical point of view. Understanding
for which classes efficient (say \pspace) algorithms are possible may be thus another future goal.
Therefore we think that the quest for better understanding the reachability problem is still valid,
even though the complexity of the reachability problem for general VASSes is settled.

A common and very often successful approach to the reachability problem is proving a short run property,
namely showing that in order to decide reachability it suffices to inspect only runs of some bounded length.
This technique was exploited a lot in the area of VASSes.
Rackoff proved his \expspace upper bound on the complexity of the coverability problem~\cite{DBLP:journals/tcs/Rackoff78}
using this approach: he has shown that if there exists a covering run, then there exists also a covering run
of at most doubly-exponential length.
Recently there was a lot of research about low dimensional VASSes and the technique of bounded run length was also
used there. In~\cite{DBLP:conf/lics/BlondinFGHM15} authors established complexity of the reachability problem
for 2-VASSes to \pspace-completeness by proving that in order to decide reachability it suffices to inspect runs
of at most exponential length. Similarly in the next paper in this line of research~\cite{DBLP:conf/lics/EnglertLT16}
it was shown that in 2-VASSes with unary transition representation it suffices to consider runs of polynomial length,
which proves \nl-completeness of the reachability problem.

This approach however meets a subtle obstacle when one tries to prove some upper bound on the complexity
of the reachability problem in say $d$-VASSes.
In order to show a bounded length property it is natural to try to unpump long runs in some way. Unpumping however
can be very tricky when the run is close to some of the axes, as only a small modification of the run may cause some counters to
become negative. A common approach to that problem is to modify a run when all its counters are high and therefore local run
modifications cannot cause any problem. Such an approach is used for example in 2-VASSes~\cite{DBLP:conf/lics/BlondinFGHM15}
and in the KLM decomposition~\cite{DBLP:conf/stoc/Kosaraju82, DBLP:journals/tcs/Lambert92, DBLP:conf/stoc/Mayr81}.
However, not all the runs may have a configuration with all counter values high. Therefore it is very convenient to have
a cycle, which increases all the counters simultaneously. Observe that existence of such a cycle implies that
the set of reachable configurations is infinite. Thus using this approach is challenging in the case when the reachability set is finite.
For finite set in turn it is hard to design algorithm substantially faster than the size of the reachability set.
It is well known that in $d$-VASSes finite reachability sets can be as large as $F_{d-1}(n)$, where
$n$ is the VASS size and $F_\alpha$ is the hierarchy of very fast growing functions, see~\cite{DBLP:journals/toct/Schmitz16}.
Therefore designing an algorithm breaking $F_{d-1}(n)$ time for the reachability problem in $d$-VASSes
(if such exist) may need some other approach. Breaking the barrier of the size of finite reachability set is possible in general,
but probably in many cases very challenging. To the best of our knowledge the only nontrivial algorithm breaking it 
is the one in~\cite{DBLP:conf/lics/EnglertLT16} doing a sophisticated analysis of possible behaviours of runs in 2-VASSes.
This motivates a search for other techniques, which may be more suitable for designing fast algorithms.

Leroux in his work~\cite{DBLP:conf/lics/Leroux09} provided an algorithm for the reachability problem,
which follows a completely different direction. He has shown that if there is no run from a configuration $s \in \N^d$
to a configuration $t \in \N^d$ in a VAS $V$ then there exists a semilinear set $S$, called a \emph{separator}, which
(1) contains $s$, (2) does not contain $t$ and (3) is an \emph{inductive invariant}, namely if $v \in S$ then also $v+t \in S$
for any transition $t$ of $V$ such that $v+t \in \N^d$. Notice that existence of a separator clearly implies non-reachability between
$s$ and $t$, so by Leroux's work non-reachability and existence of separator are equivalent. Then the following simple algorithm
decides whether $t$ is reachable from $s$: run two semi-procedures, one looks for possible runs between $s$ and $t$, longer and longer,
another one looks for possible separators between $s$ and $t$, bigger and bigger. Clearly either there exists a run or there exists
a separator, so at some point the algorithm will find it and terminate. From this perspective one can view runs and separators as dual
objects. If bounding the length of a run is nontrivial, maybe bounding the size of a separator can be another promising approach.
Notice that proving for example an $F_d(n)$ upper bound on the size of separators would provide an algorithm solving
the reachability problem which works in time at most exponential $F_d(n)$, which is still $F_d(\Oo(n))$ for $d \geq 3$.

\subparagraph*{Our contribution}
Our main contribution are two theorems stating that in VASSes fulfilling certain rather natural conditions if there are only long runs between
some of its two configurations then in a small modification of these VASSes there are only big separators for some other two configurations.
We designed Theorem~\ref{thm:simple} to have relatively simple statement, but also to be sufficiently strong for our applications.
Theorem~\ref{thm:advanced} needs more sophisticated notions and more advanced tools to be proven, but it has potentially a broader
spectrum of applications. 

Additionally we have shown that two nontrivial constructions of involved VASSes,
namely the 4-VASS from~\cite{DBLP:conf/concur/Czerwinski0LLM20} and VASS used in the \tower-hardness
construction from~\cite{DBLP:conf/stoc/CzerwinskiLLLM19} fulfil the conditions proposed by us in Theorem~\ref{thm:simple}.
This indicates that for each VASS subclass $\F$ (for example $3$-VASSes) either (1) in order to prove better upper complexity bound
for the reachability problem in $\F$ one should focus more on proving the short run property than on proving the small separator property or
(2) proving small separator property is somehow possible for $\F$.
However, in the latter case the mentioned VASS has to be constructed by the use of rather different techniques than currently known, as it needs to violate conditions of Theorems~\ref{thm:simple}~and~\ref{thm:advanced}.

We have not considered VASSes occurring in the most recent papers
proving the \ackermann-hardness~\cite{DBLP:conf/focs/Leroux21,DBLP:conf/focs/CzerwinskiO21},
but it seems to us that these techniques are promising
at least with respect to VASSes occurring in~\cite{DBLP:conf/focs/CzerwinskiO21}.

\subparagraph*{Organisation of the paper}
In Section~\ref{sec:prelim} we introduce preliminary notions and recall standard facts. Then in Section~\ref{sec:simple}
we state and prove our first main result, Theorem~\ref{thm:simple}.
In Section~\ref{sec:applications} we provide two applications of Theorem~\ref{thm:simple},
in Section~\ref{ssec:4vass} to the $4$-VASS described in~\cite{DBLP:conf/concur/Czerwinski0LLM20}
and in Section~\ref{ssec:tower} to VASSes occurring in the paper~\cite{DBLP:conf/stoc/CzerwinskiLLLM19} proving the \tower-hardness
of the reachability problem.
Finally, in Section~\ref{sec:advanced} we prove our second main result, Theorem~\ref{thm:advanced}.

\section{Preliminaries}\label{sec:prelim}

\subparagraph*{Basic notions}
We denote by $\N$ the set of nonnegative integers and by $\N_{+}$ the set of positive integers.
For $a, b \in \N$ by $[a, b]$ we denote the set $\{a, a+1, \ldots, b-1, b\}$.
For a set $S$ we write $|S|$ to denote its size, i.e. the number of its elements.
For two sets $A, B$ we define $A + B = \{a+b \mid a \in A, b \in A\}$ and $AB = \{a \cdot b \mid a \in A, b \in B\}$.
In that context we often simplify the notation and write $x$ instead of the singleton set $\{x\}$,
for example $x + \N y$ denotes the set $\{x\} + \N \{y\}$.
The \emph{description size} of an irreducible fraction $\frac{p}{q}$ is $\max(|p|, |q|)$, for $r \in \Q$
its description size is the description size of its irreducible form.

For a $d$-dimensional vector $x = (x_1, \ldots, x_d) \in \R^d$ and index $i \in [1,d]$ we write $x[i]$ for $x_i$.
For $S \subseteq [1,d]$ we write $\proj_S(x)$ to denote the $|S|$-dimensional vector
obtained from $x$ by removing all the coordinates outside $S$.
The \emph{norm} of a vector $x \in \N^d$ is $\norm(x) = \max_{i \in [1,d]} |x[i]|$.
For $i \in [1,d]$ the elementary vector $e_i$
is the unique vector such that $e_i[j] = 0$ for $j \neq i$ and $e_i[i] = 1$.
By $0^d \in \N^d$ we denote the $d$-dimensional vector with all the coordinates being zero.

\subparagraph*{Vector Addition Systems}
A $d$-dimensional Vector Addition System with States (shortly $d$-VASS or just a VASS) consists of finite set of \emph{states} $Q$
and finite set of \emph{transitions} $T \subseteq Q \times \Z^d \times Q$. 
\emph{Configuration} of a $d$-VASS $V = (Q, T)$ is a pair $(q, v) \in Q \times \N^d$, we often write $q(v)$ instead of $(q, v)$.
For a configuration $c = q(v)$ we write $\state(c) = q$.
For a set of vectors $S \subseteq \N^d$ and state $q \in Q$ we write $q(S) = \{q(v) \mid v \in S\}$.
Transition $t = (p, u, q)$ can be fired in configuration $(r, v) \in Q \times \N^d$ if $p = r$ and $u+v \in \N^d$.
Then we write $p(v) \trans{t} q(u+v)$. The triple $(p(u), t, q(u+v))$ is called an \emph{anchored transition}.
A \emph{run} is a sequence of anchored transitions $\rho = (c_1, u_1, c_2), \ldots, (c_n, u_n, c_{n+1})$.
Such a run $\rho$ is then a run \emph{from} configuration $c_1$ \emph{to} configuration $c_{n+1}$
and \emph{traverses} through configurations $c_i$ for $i \in [2,n]$.
We also say that $\rho$ is from $\state(c_1)$ to $\state(c_{n+1})$.
If there is a run from configuration $c$ to configuration $c'$ we also say that $c'$ is \emph{reachable} from $c$
or $c$ \emph{reaches} $c'$ and write $c \reaches c'$. Otherwise we write $c \nreaches c'$.
The configuration $c$ is the \emph{source} of $\rho$ while configuration $c'$ is the \emph{target} of $\rho$,
we write $\src(\rho) = c$ and $\trg(\rho) = c'$.
For a configuration $c \in Q \times \N^d$ we denote $\post_V(c) = \{c' \mid c \reaches c'\}$
the set of all the configurations reachable from $c$ and $\pre_V(c) = \{c' \mid c' \reaches c \}$
the set of all the configurations which reach $c$.
The \emph{reachability problem for VASSes} given a VASS $V$ and two its configurations $s$ and $t$
asks whether $s$ reaches $t$ in $V$.
For a VASS $V$ by its \emph{size} we denote the total number of bits needed to represent its states and transitions.
A VASS is said to be \emph{binary} if numbers in its transitions are encoded in binary.
Effect of a transition $(c, u, c') \in Q \times \Z^d \times Q$ is the vector $u \in \N^d$.
We extend this notion naturally to anchored transitions and runs, effect of the run $\rho = (c_1, u_1, c_2), \ldots, (c_n, u_n, c_{n+1})$
is equal to $u_1 + \ldots + u_n$.
Vector Addition Systems (VASes) are VASSes with just one state or in other words VASSes without states.
It is well known and simple to show that the reachability problems for VASes and for VASSes are polynomially interreducible.
In this work we focus wlog. on the reachability problem for VASSes.

\subparagraph*{Semilinear sets}
For any vectors $b, v_1, \ldots, v_k \in \N^d$ the set $L = b + \N v_1 + \ldots + \N v_k$ is called a \emph{linear} set.
Then vector $b$ is the \emph{base} of $L$ and vectors $v_1, \ldots, v_k$ are \emph{periods} of $L$.
Set of vectors is \emph{semilinear} if it is a finite union of linear sets.
Set of VASS configurations $S \subseteq Q \times \N^d$ is \emph{semilinear} if it is a finite union of sets of the
form $q_i(S_i) \subseteq Q \times \N^d$, where $S_i$ are semilinear as sets of vectors.
The \emph{size} of a representation of a linear set is the sum of norms of its base and periods.
The \emph{size} of a representation of a semilinear set $\bigcup_i L_i$ is the sum of sizes of representations of the sets $L_i$.
The size of a semilinear set is the size of its smallest representation.

For a $d$-VASS $V = (Q, T)$ and two of its configurations $s, t \in Q \times \N^d$ a set $S \subseteq Q \times \N^d$
of configurations is a \emph{separator for $(V, s, t)$} if it fulfils the following conditions:
1) $s \in S$, 2) $t \not\in S$, 3) $S$ is invariant under transitions of $V$, namely for any $c \in S$ such that $c \trans{t} c'$
for some $t \in T$ we also have $c' \in S$. In our work we usually do not exploit by condition 3) by itself,
but use the facts which are implied by all the conditions 1), 2) and 3) together: $\post(s) \subseteq S$ and $\pre(t) \cap S = \emptyset$.

\subparagraph*{Well quasi-order on runs}
We say that an order $(X, \preceq)$ is a well-quasi order (wqo) if in every infinite sequence $x_1, x_2, \ldots$ of elements of $X$
there is a \emph{domination}, i.e. there exist $i < j$ such that $x_i \preceq x_j$. 

Fix a $d$-VASS $V = (Q, T)$.
We define here a very useful order on runs, which turns out to be a wqo (a weaker version was originally introduced in~\cite{DBLP:journals/tcs/Jancar90}).
For two configurations $p(u), q(v) \in Q \times \N^d$ we write $p(u) \preceq q(v)$ if $p = q$ and $u[i] \leq v[i]$ for each $i \in [1,d]$.
For two anchored transitions in $(c_1, t, c_2), (c'_1, t', c'_2)$ we write $(c_1, t, c_2) \preceq (c'_1, t', c'_2)$
if $t = t'$, $c_1 \preceq c'_1$ and $c_2 \preceq c'_2$ (notice that the last condition is actually implied by the previous two).
For two runs $\rho = m_1 \ldots m_k$ and $\rho' = m'_1 \ldots m'_\ell$, where all $m_i$ for $i \in [1,k]$
and $m'_i$ for $i \in [1,\ell]$ are anchored transitions we write
$\rho \unlhd \rho'$ if and there exists a sequence of indices $i_1 < i_2 < \ldots < i_{k-1} < i_k = \ell$ such that
for each $j \in [1,k]$ we have $m_j \preceq m'_{i_j}$. Notice there that setting $i_k = \ell$ implies that $\trg(\rho) \preceq \trg(\rho')$.
All the subsequent considerations can be analogously applied in the case when we demand $\trg(\rho) = \trg(\rho')$
and $i_1 = 1$, but $i_k$ does not necessarily equal $\ell$, which enforces that $\src(\rho) \preceq \src(\rho')$.
The following claim is a folklore, for a proof see Proposition 19 in~\cite{DBLP:conf/stacs/ClementeCLP17}.

\begin{claim}
Order $\unlhd$ is a wqo on runs with the same source.
\end{claim}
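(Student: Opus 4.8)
The plan is to derive the statement from the two classical well-quasi-order theorems of Dickson and Higman, the only genuine work being to accommodate the constraint $i_k = \ell$ built into the definition of $\unlhd$, which blocks a direct appeal to Higman's lemma on sequences.

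First I would observe that the set $A$ of anchored transitions of $V$, ordered by $\preceq$, is itself a wqo. An anchored transition $(c_1, t, c_2)$ is determined by its source configuration $c_1 \in Q \times \N^d$ together with the transition $t \in T$, since $c_2$ is then forced; and $\preceq$ on anchored transitions is exactly the product of equality on $t$ and the componentwise order $\preceq$ on $c_1$. Now $Q \times \N^d$ ordered by $\preceq$ is a wqo because $Q$ is finite (equality is a wqo on a finite set), $\N^d$ is a wqo by Dickson's lemma, and a finite product of wqos is a wqo; since $T$ is finite, adjoining it keeps the order a wqo. Hence $(A, \preceq)$ is a wqo.

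If $\unlhd$ were literally Higman's subword embedding on $A^*$, we could conclude immediately, as finite sequences over a wqo form a wqo under that embedding. The obstacle is the extra requirement $i_k = \ell$, forcing the last anchored transition of the smaller run to be mapped onto the last of the larger. To handle this I would split each run: write $\rho_n = \sigma_n a_n$, where $a_n$ is the final anchored transition and $\sigma_n$ its (possibly empty) prefix, and map $\rho_n \mapsto (\sigma_n, a_n) \in A^* \times A$. Equip $A^*$ with the Higman embedding $\unlhd_H$ (a wqo by Higman's lemma together with the previous paragraph) and $A$ with $\preceq$ (a wqo by the previous paragraph); the product order on $A^* \times A$ is again a wqo.

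Finally I would run the product argument. There is a single empty run, so if infinitely many of the given runs are empty two of them coincide and dominate trivially; hence we may assume all runs nonempty, so each $a_n$ exists. Passing to the sequence $(\sigma_1, a_1), (\sigma_2, a_2), \ldots$ and using that the product order is a wqo yields indices $i < j$ with $\sigma_i \unlhd_H \sigma_j$ and $a_i \preceq a_j$. The embedding witnessing $\sigma_i \unlhd_H \sigma_j$ places the first $k-1$ transitions of $\rho_i$ into the first $\ell - 1$ positions of $\rho_j$ at strictly increasing indices $i_1 < \cdots < i_{k-1} \le \ell - 1$; setting $i_k = \ell$ and using $a_i \preceq a_j$ extends this to a witness for $\rho_i \unlhd \rho_j$. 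The main point to get right is precisely this split-and-reassemble step forced by the $i_k = \ell$ condition; everything else is a bookkeeping combination of Dickson's and Higman's lemmas. I note in passing that the hypothesis that the runs share a source is not actually used by this argument, so the claim holds for arbitrary families of runs.
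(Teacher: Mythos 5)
Your proof is correct: reducing $\preceq$ on anchored transitions to Dickson's lemma (using that $c_2$ is determined by $c_1$ and $t$), splitting each run into (prefix, last transition) to absorb the $i_k=\ell$ constraint, and then invoking Higman's lemma together with closure of wqos under finite products is exactly the standard argument. The paper gives no proof of its own, citing the claim as folklore with a pointer to Proposition 19 of the referenced work, which proceeds along the same Dickson--Higman lines; your closing observation that the shared-source hypothesis is not actually needed is also accurate.
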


The order $\unlhd$ has a nice property that runs bigger than a fixed one are additive in a certain sense.
The following claim is also a folklore, for a proof see Lemma 23 in~\cite{DBLP:journals/corr/ClementeCLP16}
(arXiv version of~\cite{DBLP:conf/stacs/ClementeCLP17}).

\begin{claim}\label{cl:adding-runs}
Let $\rho$, $\rho_1$ and $\rho_2$ be runs of VASS $V$ with $\src(\rho) = \src(\rho_1) = \src(\rho_2)$
such that $\rho \unlhd \rho_1, \rho_2$,
$\trg(\rho_1) = \trg(\rho) + \delta_1$ and $\trg(\rho_2) = \trg(\rho) + \delta_2$ for some $\delta_1, \delta_2 \in \N^d$.
Then there exist a run $\rho'$ such that $\src(\rho) = \src(\rho')$, $\rho \unlhd \rho'$
and $\trg(\rho') = \trg(\rho) + \delta_1 + \delta_2$.
\end{claim}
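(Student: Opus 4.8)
The plan is to realise $\rho'$ concretely as $\rho$ with \emph{two} families of surplus transitions inserted: the transitions that $\rho_1$ carries beyond $\rho$, and those that $\rho_2$ carries beyond $\rho$. The guiding intuition is that, since $\rho \unlhd \rho_1$ with $\trg(\rho_1) = \trg(\rho) + \delta_1$, the surplus of $\rho_1$ contributes exactly the extra effect $\delta_1$, and symmetrically the surplus of $\rho_2$ contributes $\delta_2$; inserting both surpluses into one run should yield the combined surplus $\delta_1 + \delta_2$, while keeping everything above $\rho$.

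Concretely, I write $\rho = m_1 \cdots m_k$ and let $i_1 < \cdots < i_k = \ell$ and $i'_1 < \cdots < i'_k$ be the index sequences witnessing $\rho \unlhd \rho_1$ and $\rho \unlhd \rho_2$. They cut the two runs into blocks
$$\rho_1 = A_0\, t_1\, A_1\, t_2 \cdots A_{k-1}\, t_k\, A_k, \qquad \rho_2 = A'_0\, t_1\, A'_1 \cdots A'_{k-1}\, t_k\, A'_k,$$
where $t_j$ is the transition common to $m_j$, $m'_{i_j}$ and $m''_{i'_j}$ (they agree because $\preceq$ on anchored transitions forces $t=t'$), and $A_j, A'_j$ are the surplus segments. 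As $i_k = \ell$, the blocks $A_k, A'_k$ are empty. Writing $c_0, \ldots, c_k$ for the configurations of $\rho$, and since matched transitions have equal effect, the accumulated surplus effects $\alpha_j := \eff(A_0 \cdots A_{j-1})$ and $\beta_j := \eff(A'_0 \cdots A'_{j-1})$ satisfy $\alpha_k = \delta_1$, $\beta_k = \delta_2$, while the $j$-th configuration of $\rho_1$ equals $c_j + \alpha_j$ and that of $\rho_2$ equals $c_j + \beta_j$. In particular $\alpha_j, \beta_j \succeq 0$, precisely because these configurations dominate those of $\rho$ (by $\rho \unlhd \rho_1,\rho_2$).

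I then define $\rho' := A_0\, A'_0\, t_1\, A_1\, A'_1\, t_2 \cdots A_{k-1}\, A'_{k-1}\, t_k$ and verify by induction that after the $j$-th copy of $t_j$ the reached configuration is $c_j + \alpha_j + \beta_j$. Its final configuration is then $c_k + \delta_1 + \delta_2 = \trg(\rho) + \delta_1 + \delta_2$, as required. Mapping each $m_j$ to its copy $t_j$ in $\rho'$, whose source $c_{j-1}+\alpha_j+\beta_j$ and target $c_j+\alpha_j+\beta_j$ dominate $\src(m_j)=c_{j-1}$ and $\trg(m_j)=c_j$, witnesses $\rho \unlhd \rho'$; and since $A_k, A'_k$ are empty the last copy $t_k$ is the last transition of $\rho'$, so the index sequence ends at the length of $\rho'$.

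The only genuine work, and the step I expect to be the crux, is checking that every inserted block fires without any counter going negative, so that the two independent insertions are compatible. The engine is monotonicity of VASS firing: a segment valid from a configuration $c$ is valid from every $c'' \succeq c$ and ends at least as high. In $\rho'$ the block $A_j$ is executed from $c_j + \alpha_j + \beta_j$, whereas in $\rho_1$ it ran from $c_j + \alpha_j$; as $\beta_j \succeq 0$ the starting point in $\rho'$ dominates, so $A_j$ stays valid. Symmetrically $A'_j$ runs in $\rho'$ from a point dominating its $\rho_2$ starting point $c_j+\beta_j$ (using $\alpha_{j+1}\succeq 0$), and each $t_j$ fires from a point dominating $c_{j-1}$. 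Thus non-negativity is preserved throughout the whole interleaving, which is exactly what makes the combined insertion a legitimate run.
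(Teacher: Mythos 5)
Your proof is correct. Note that the paper does not actually prove Claim~\ref{cl:adding-runs}: it declares it folklore and defers to Lemma~23 of the arXiv version of~\cite{DBLP:conf/stacs/ClementeCLP17}, so there is no in-paper argument to compare against. What you supply is the standard direct construction behind that folklore fact: cut $\rho_1$ and $\rho_2$ along the embedding indices into surplus blocks $A_j$, $A'_j$ interleaved with the matched transitions $t_j$, shuffle both families of surplus blocks into $\rho$, and push validity through by monotonicity of firing, using that the accumulated surpluses $\alpha_j,\beta_j$ are componentwise nonnegative (which is exactly what $\rho \unlhd \rho_1,\rho_2$ gives). All the delicate points are handled: the effect bookkeeping ($\alpha_k=\delta_1$, $\beta_k=\delta_2$ because $A_k,A'_k$ are empty, thanks to $i_k=\ell$), the non-negativity of every intermediate configuration, and the witness for $\rho\unlhd\rho'$ ending at the last position. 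The only thing you leave implicit is that each block $A_j$ and $A'_j$ both starts and ends in the state $\state(c_j)$, so that the concatenation $A_j\,A'_j\,t_{j+1}$ is well-formed as a path in the underlying graph; this does follow from your setup, since $\preceq$ on configurations forces equality of states, but it is worth a sentence. As a self-contained replacement for the citation, your argument would stand.
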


The following corollary can be easily shown by induction on $n$.

\begin{corollary}\label{corr:pumping}
Let $\rho \unlhd \rho'$ be runs of VASS $V$ such that
$\src(\rho') = \src(\rho)$ and $\trg(\rho') = \trg(\rho) + \delta$ for some $\delta \in \N^d$.
Then for any $n \in \N$ there exists a run $\rho_n$ of $V$ such that $\src(\rho_n) = \src(\rho)$
and $\trg(\rho_n) = \trg(\rho) + n \delta$.
\end{corollary}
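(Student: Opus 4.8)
The plan is to prove the statement by induction on $n$, using the additivity property of Claim~\ref{cl:adding-runs} as the engine of the inductive step. The only subtlety is that a naive induction hypothesis---merely asserting the existence of a run $\rho_n$ with the prescribed source and target---is not strong enough to be fed back into Claim~\ref{cl:adding-runs}, since that claim requires both of its larger runs to dominate the fixed small run $\rho$. I would therefore strengthen the hypothesis to carry along the domination $\rho \unlhd \rho_n$ as well.

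Concretely, the strengthened statement I would prove by induction is: for every $n \in \N$ there is a run $\rho_n$ with $\src(\rho_n) = \src(\rho)$, $\trg(\rho_n) = \trg(\rho) + n\delta$, and $\rho \unlhd \rho_n$. For the base case $n = 0$ I would take $\rho_0 = \rho$ itself, which trivially satisfies all three conditions (with $\rho \unlhd \rho$ by reflexivity). One could equally start at $n = 1$ using $\rho_1 = \rho'$, whose required properties are exactly the hypotheses of the corollary.

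For the inductive step, assume $\rho_n$ has been constructed with the three properties. I would apply Claim~\ref{cl:adding-runs} with the fixed small run $\rho$, and with $\rho_1 := \rho_n$ and $\rho_2 := \rho'$. All three runs share the source $\src(\rho)$, and both $\rho_n$ and $\rho'$ dominate $\rho$---the former by the inductive hypothesis, the latter by the corollary's assumption. Writing $\delta_1 := n\delta$ and $\delta_2 := \delta$ (both in $\N^d$ since $\delta \in \N^d$), the target conditions $\trg(\rho_n) = \trg(\rho) + \delta_1$ and $\trg(\rho') = \trg(\rho) + \delta_2$ are precisely what Claim~\ref{cl:adding-runs} demands. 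The claim then yields a run which I rename $\rho_{n+1}$, satisfying $\src(\rho_{n+1}) = \src(\rho)$, $\rho \unlhd \rho_{n+1}$, and $\trg(\rho_{n+1}) = \trg(\rho) + (n+1)\delta$, closing the induction.

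The main (and essentially only) obstacle is identifying the right induction hypothesis: without propagating the domination $\rho \unlhd \rho_n$, the step cannot be iterated, since the additivity of Claim~\ref{cl:adding-runs} is only available for runs lying above $\rho$. Once that clause is in place, each step is a direct instantiation of the claim, and the run $\rho_n$ required by the corollary's statement is obtained by simply forgetting the extra domination clause.
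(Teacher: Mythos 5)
Your proof is correct and is exactly the induction on $n$ that the paper sketches (the paper only states that the corollary ``can be easily shown by induction on $n$'' and gives no details). Your observation that the induction hypothesis must carry the domination $\rho \unlhd \rho_n$ is the right one, and it works because the conclusion of Claim~\ref{cl:adding-runs} explicitly returns a run dominating $\rho$, so the strengthened hypothesis propagates.
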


The above notions will be useful in the proof of Theorem~\ref{thm:advanced}.

\section{Main tool}\label{sec:simple}

In this section we state and prove our first main result, Theorem~\ref{thm:simple}.
It is simpler both in the formulation and in the proof from the more involved
Theorem~\ref{thm:advanced}, but sufficient to show applications in Section \ref{sec:applications}.

The statement of Theorem~\ref{thm:simple} may seem at first glance a bit artificially involved
thus before stating it we explain the intuition behind this research direction.
Recently there is a big progress in showing lower complexity bounds for the reachability problem
in VASSes. On the intuitive level proving a new lower bound often boils down to finding
a new source of hardness in VASSes or in other words designing a family of VASSes
which is involved in some sense. In particular these involved VASSes should have
a run from some source configuration to some target configuration,
but each such run should be long. For some reason VASS families proposed in a number of recent constructions
share many similarities. The first one is that the runs from the source to the target usually have some
very specific shape and there is exactly one shortest run. Along this run usually some very restrictive
invariants are kept and any deviation from them results in the impossibility of reaching the target configuration.
To our best knowledge the same schema reappears in all the known lower bound
results, in particular in the recent ones~\cite{DBLP:journals/jacm/CzerwinskiLLLM21,DBLP:conf/concur/Czerwinski0LLM20,DBLP:conf/focs/Leroux21,DBLP:conf/focs/CzerwinskiO21,DBLP:journals/corr/abs-2203-04243}.
In many of these constructions the following invariant is kept: product of some two counters
$\vr{x}$ and $\vr{y}$ equals the other counter $\vr{z}$ where the counter $\vr{x}$ is bounded,
while $\vr{y}$ and $\vr{z}$ are unbounded. In other words the ratio $\vr{y} / \vr{z}$ is fixed and bounded
at some point at the run. The reason behind this phenomenon is the use of so called
multiplication triples technique introduced in~\cite{DBLP:conf/stoc/CzerwinskiLLLM19}.
Theorem~\ref{thm:simple} tries to describe this situation more precisely and state that in some way for VASSes
following this popular line of design we might expect to have troubles with avoiding big separators.
Because assumptions of Theorem~\ref{thm:simple} are actually very strong and restrictive
we formulate also Theorem~\ref{thm:advanced} in Section~\ref{sec:advanced} which tries
to take a bit more flexible approach for making the above described phenomena more precise.
Notice that in Theorem~\ref{thm:simple} we describe VASSes with no run from $s$ to $t$, but
there is a close resemblance to the above described setting as there is a run from $s + e_2$ to $t$.

\begin{theorem}\label{thm:simple}
Let $V$ be a $d$-VASS, $s, t \in Q \times \N^d$ be two its configurations, $q \in Q$ be a state
and a line $\alpha = a + \N \Delta$ for vectors $a, \Delta \in \N^d$ be such that
\begin{enumerate}[(1)]
  \item $\Delta = (\Delta[1], \Delta[2], 0^{d-2}) \in \N^d$,
  \item there is no run from $s$ to $t$,
  \item $q(\alpha) \subseteq \post_V(s)$,
  \item $q(\alpha + \N_{+} e_2) \subseteq \pre_V(t)$.
\end{enumerate}
Then each separator for $(V, s, t)$ contains a period $r \cdot \Delta$ for some $r \in \Q$.
\end{theorem}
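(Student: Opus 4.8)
The plan is to fix an arbitrary separator $S$ for $(V,s,t)$, restrict attention to the state $q$, and work entirely with the trace $S_q = \{v \in \N^d : q(v) \in S\}$, which is semilinear. Using the facts recalled after the definition of a separator, namely $\post_V(s) \subseteq S$ and $\pre_V(t) \cap S = \emptyset$, condition (3) yields $a + \N\Delta \subseteq S_q$, while condition (4) guarantees that no point of $(a+\N\Delta)+\N_+ e_2$ lies in $S_q$; I will call these latter points \emph{forbidden}. The goal then becomes purely a statement about $S_q$: in every semilinear representation $S_q = \bigcup_i L_i$ some period is a rational multiple of $\Delta$.

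First I would dispose of degenerate directions. If $\Delta = 0$ the claim holds trivially with $r = 0$. If $\Delta[1] = 0$ but $\Delta[2] \geq 1$, then $q(a + \Delta) = q(a + \Delta[2]\,e_2)$ is simultaneously reachable from $s$ (condition (3) with $n=1$) and a predecessor of $t$ (condition (4) with $n=0$, $m=\Delta[2]$), forcing $s \reaches t$ and contradicting (2); so I may assume $\Delta[1] \geq 1$. Next, since the line $a + \N\Delta$ is infinite and contained in the finite union $\bigcup_i L_i$, by pigeonhole one piece $L_1 = b + \sum_j \N v_j$ contains $a + n\Delta$ for infinitely many $n$. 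Writing $a + n\Delta - b = \sum_j c_j v_j$ with $c_j \in \N$, dividing by $n$ and letting $n \to \infty$ places $\Delta$ in the closed real cone generated by the periods, so $\Delta = \sum_j \lambda_j v_j$ with $\lambda_j \geq 0$; set $J = \{j : \lambda_j > 0\} \neq \emptyset$. Because every $v_j \in \N^d$ and $\Delta$ vanishes on coordinates $3,\dots,d$, every $v_j$ with $j \in J$ must be supported on coordinates $\{1,2\}$ — this is where condition (1) enters. Moreover no period can be a positive multiple of $e_2$, for adding such a period to a point $a + n\Delta \in L_1$ would stay inside $L_1 \subseteq S_q$ yet land on a forbidden point; hence $v_j[1] \geq 1$ for every $j \in J$.

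The crux is a slope argument in the $(1,2)$-plane. The identity $(\Delta[1],\Delta[2]) = \sum_{j \in J} \lambda_j (v_j[1], v_j[2])$ exhibits $\Delta[2]/\Delta[1]$ as a positive weighted average of the slopes $v_j[2]/v_j[1]$. If these were not all equal to $\Delta[2]/\Delta[1]$, some $v^\ast = v_{j^\ast}$ would have strictly larger slope, i.e.\ $v^\ast[2]\,\Delta[1] > v^\ast[1]\,\Delta[2]$. Then $\Delta[1]\cdot v^\ast = v^\ast[1]\cdot\Delta + m\,e_2$ with $m = v^\ast[2]\,\Delta[1] - v^\ast[1]\,\Delta[2] \geq 1$, so adding $\Delta[1]$ copies of the period $v^\ast$ to a line point $a + n_0\Delta \in L_1$ produces $a + (n_0 + v^\ast[1])\Delta + m\,e_2 \in L_1 \subseteq S_q$, a forbidden point — contradiction. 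Therefore all slopes equal $\Delta[2]/\Delta[1]$, and since each $v_j$ ($j \in J$) is supported on $\{1,2\}$ this forces $v_j = r\Delta$ with $r = v_j[1]/\Delta[1] \in \Q_{>0}$. As the representation $\bigcup_i L_i$ was arbitrary, this applies to every representation of $S_q$, in particular the minimal one, which is exactly the asserted period $r\cdot\Delta$ of the separator.

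I expect the main obstacle to be precisely this final construction: arranging for conditions (1) and (4) to cooperate so that a single ``too steep'' period is forced to overshoot the line $a + \N\Delta$ and strike the forbidden region. The supporting ingredients — reducing to one linear piece and establishing the cone membership $\Delta \in \mathrm{cone}_{\RP}(v_j)$ — are routine, and the remaining observations (nonnegativity of periods, and the vanishing of $\Delta$ on the tail coordinates channelling all relevant periods into the $(1,2)$-plane) are the bookkeeping that makes the averaging argument applicable.
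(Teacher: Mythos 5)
Your proof is correct and follows essentially the same route as the paper's: you reduce to a single linear piece containing infinitely many points of $\alpha$, confine the relevant periods to the first two coordinates via condition (1), and use exactly the paper's key step (its Claim~\ref{cl:ratio}) --- adding $\Delta[1]$ copies of a too-steep period to a line point lands in the forbidden set $\alpha + \N_{+} e_2$. The only divergence is cosmetic: where the paper rules out ``all slopes strictly smaller'' by deriving an explicit upper bound on $n$, you extract $\Delta$ as a point of the (closed) cone generated by the periods and note that its slope is a strictly positive weighted average of theirs, an equivalent way to force equality.
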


\begin{proof}
Consider an arbitrary separator $S$ for $(V, s, t)$. Let $S_q = \{v \mid q(v) \in S, v \in \N^d\}$
be its part devoted to the state $q$ and let $S_q = \bigcup_{i \in I} L_i$,
where $L_i$ are linear sets. As the line $\alpha$ contains infinitely many points
and $\alpha \subseteq S_q$ then some of the linear sets $L_i$ have to contain infinitely many points of $\alpha$.
Denote this $L_i$ by $L$, let $L = b + \N p_1 + \ldots + \N p_k$.
We know that for arbitrarily big $n$ we have $a + n\Delta = b + n_1 p_1 + \ldots + n_k p_k$ for some $n_i \in \N$.
Wlog. we can assume that $n_i > 0$, we just do not write the periods $p_i$ with coefficient $n_i = 0$.
Notice first that for each coordinate $j \in [3,d]$ we have $(a + n\Delta)[j] \leq \norm(a)$.
Therefore for $p_i$ such that $p_i[j] > 0$ for some $j \in [3,d]$ we have $n_i \leq \norm(a)$.
Let $P_0$ be the set of periods $p_i$ such that $p_i[j] = 0$ for all $j \in [3,d]$ and
$P_{\neq 0}$ be the set of the other periods $p_i$.
We thus have
\[
(a - b - \sum_{p_i \in P_{\neq 0}} n_i p_i) + n\Delta = \sum_{p_i \in P_0} n_i p_i,
\]
where the sum $v = (a - b - \sum_{p_i \in P_{\neq 0}} n_i p_i)$ has a norm bounded by
$B = \norm(a) + \norm(b) + k \cdot \norm(a) \cdot \max_{p_i \in P_{\neq 0}} \norm(p_i)$,
which is independent of $n$. If we restrict the equation to the first two coordinates of the considered vectors we have
\begin{equation}\label{eq:pzero}
(v[1], v[2]) + n (\Delta[1], \Delta[2]) = \sum_{p_i \in P_0} n_i (p_i[1], p_i[2]).
\end{equation}
We aim at showing that one of $p_i$ is equal to $r \cdot \Delta$ for some $r \in \Q$. Recall that for each $j \in [3,d]$
we have $\Delta[j] = p_i[j] = 0$, so it is enough to show that $(p_i[1], p_i[2]) = r \cdot (\Delta[1], \Delta[2])$.

We first show the following claim.

\begin{claim}\label{cl:ratio}
For each period $p \in P_0$ we have $\Delta[1] \cdot p[2] \leq \Delta[2] \cdot p[1]$.
\end{claim}

\begin{proof}
The intuitive meaning of $\Delta[1] \cdot p[2] \leq \Delta[2] \cdot p[1]$ is that $\frac{\Delta[1]}{\Delta[2]} \leq \frac{p[1]}{p[2]}$,
however we cannot write the fraction $\frac{p[1]}{p[2]}$ as it might happen that $p[2] =  0$.
Assume towards a contradiction that for some $p \in P_0$ we have $\Delta[1] \cdot p[2] > \Delta[2] \cdot p[1]$.
Let $\delta = \Delta[1] \cdot p[2] - \Delta[2] \cdot p[1] > 0$. Recall that $a + n\Delta \in L$ for some $n \in \N$,
therefore also $a + n\Delta + \Delta[1] \cdot p \in L$ as $p \in P_0$ is a period of $L$.
Then however
\begin{align*}
\Delta[1] \cdot p & = \Delta[1]  \cdot (p[1], p[2], 0^{d-2}) = (\Delta[1] \cdot p[1], \Delta[1] \cdot p[2], 0^{d-2}) \\
& = (\Delta[1] \cdot p[1], \Delta[2] \cdot p[1] + \delta, 0^{d-2}) = p[1] \cdot \Delta + \delta \cdot e_2 \in \N \Delta + \N_{+} e_2.
\end{align*}
Therefore $a + n\Delta + \Delta[1] \cdot p \in b + \N \Delta + \N_{+} e_2$ and it means that $q(a + n\Delta + \Delta[1] \cdot p) \in \pre_V(t)$.
However we know that $q(a + n\Delta + \Delta[1] \cdot p) \in q(S_q) \subseteq S$ and therefore separator $S$ nonempty intersects 
the set $\pre_V(t)$. This is a contradiction with the definition of separator.
\end{proof}

In order to show that $p = r \cdot \Delta$ for some $p \in P_0$ it is sufficient to show that $\Delta[1] \cdot p[2] = \Delta[2] \cdot p[1]$.
Assume towards a contradiction that for all $p \in P_0$ we have $\Delta[1] \cdot p[2] \neq \Delta[2] \cdot p[1]$.
By Claim~\ref{cl:ratio} we know that actually for all $p \in P_0$ we have that $\Delta[1] \cdot p[2] < \Delta[2] \cdot p[1]$.
Thus $p[1] > 0$ and we can equivalently write that for all $p \in P_0$ we have that $\frac{\Delta[2]}{\Delta[1]} > \frac{p[2]}{p[1]}$.
Let $F$ be the maximal value of $\frac{p[2]}{p[1]}$ for $p \in P_0$, clearly $\frac{\Delta[2]}{\Delta[1]} > F$,
so
\begin{equation}\label{eq:f}
\Delta[2] > F \cdot \Delta[1].
\end{equation}

By~\eqref{eq:pzero} we know that
\[
\frac{(v+n\Delta)[2]}{(v+n\Delta)[1]} \leq F,
\]
as $v+n\Delta$ is a positive linear combination of vectors $p \in P_0$ and for each $p \in P_0$
we have $\frac{p[2]}{p[1]} \leq F$.
Therefore it holds
\[
v[2] + n\Delta[2] \leq F(v[1] + n\Delta[1])
\]
and equivalently
\[
n(\Delta[2] - F\Delta[1]) \leq Fv[1] - v[2].
\]
By~\eqref{eq:f} we have that $\Delta[2] - F\Delta[1] > 0$ therefore
\[
n \leq \frac{Fv[1] - v[2]}{\Delta[2] - F\Delta[1]}.
\]
This is in contradiction with the assumption that $n$ can be arbitrarily big
and finishes the proof.
\end{proof}

\begin{remark}
In Theorem~\ref{thm:simple} instead of separator for $(V, s, t)$ one can consider a separator for $(V', s, t)$,
where $V'$ is obtained from $V$ by adding a loop in state $q$ with the effect of decreasing the second counter.
Indeed, it is easy to observe that all the points 1-4 in the theorem statement remain true after substitution of $V$ by $V'$.
Such a version of Theorem~\ref{thm:simple} is a bit more convenient for some of the applications.
\end{remark}

\section{Applications}\label{sec:applications}
In this section we show how Theorem~\ref{thm:simple} can be used to obtain lower bounds
on the separator size. In Section~\ref{ssec:4vass} we prove that using a construction from~\cite{DBLP:conf/concur/Czerwinski0LLM20}
one can obtain a $4$-VASS with separators of at least doubly-exponential size. 
In Section~\ref{ssec:tower} we show that there exist VASSes with separators of arbitrary high elementary size
of a special shape. Existence of separators of arbitrary high elementary size is an easy consequence
of \tower-hardness of VASS reachability problem, we provide in Theorem~\ref{thm:tower-sep}
a concrete instance of such a separator. It is not a big contribution, but the aim of proving Theorem~\ref{thm:tower-sep}
is rather to illustrate that our techniques can be quite easily applied to many existing VASS examples.

\subsection{Doubly-exponential separator in a $4$-VASS}\label{ssec:4vass}
The aim of this section is to show the following theorem.

\begin{theorem}\label{thm:4vass-sep}
There exists a family of binary $4$-VASSes $(V_n)_{n \in \N}$ of size $\Oo(n^3)$
such that for some configurations $s_n, t_n$ of $V_n$ with $\norm(s_n), \norm(t_n) \leq 1$
such that $s_n$ does not reach $t_n$ the smallest
separator for $(V_n, s_n, t_n)$ is of doubly-exponential size wrt. $n$.
\end{theorem}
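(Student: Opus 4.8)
The plan is to read off $V_n$, the configurations $s_n, t_n$, a state $q$, and a line $\alpha = a + \N\Delta$ directly from the $4$-VASS construction of~\cite{DBLP:conf/concur/Czerwinski0LLM20}, and then apply Theorem~\ref{thm:simple}. That theorem guarantees that every separator for $(V_n, s_n, t_n)$ contains a period proportional to $\Delta$, so it suffices to arrange that the smallest nonzero integer vector proportional to $\Delta$ already has doubly-exponential norm. Concretely I aim for $\Delta = (\Delta[1], \Delta[2], 0, 0)$ with $\gcd(\Delta[1], \Delta[2]) = 1$ and $\max(\Delta[1], \Delta[2]) = \Theta(2^{2^n})$; the clean target is $\Delta = (1, \Delta[2], 0, 0)$ with $\Delta[2] = \Theta(2^{2^n})$.

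First I would recall the multiplication-triples gadget underlying~\cite{DBLP:conf/concur/Czerwinski0LLM20}: composing a polynomial number of such gadgets yields a $4$-VASS of size $\Oo(n^3)$ whose counters attain doubly-exponential values while the numbers occurring in transitions stay polynomially bounded. From it I would extract a distinguished state $q$ together with a cycle on $q$ whose net effect is exactly $\Delta = (1, \Delta[2], 0, 0)$, where counter $1$ plays the role of the multiplicand $y$, counter $2$ the role of the product $z$, and the fixed (but doubly-exponentially large) multiplier $x = \Delta[2] = \Theta(2^{2^n})$ is held in a third counter that stays constant along the cycle; iterating the cycle traces out the line $q(\alpha)$. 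The source $s_n$ and target $t_n$ of norm $\leq 1$ are the initial and accepting configurations of the gadget, chosen so that a configuration of $q$ lies on $\alpha$ precisely when the multiplicative invariant $x \cdot y = z$ holds with equality, and so that having a surplus in counter $2$ is exactly what the checking phase can consume on the way to $t_n$.

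Next I would verify the four hypotheses of Theorem~\ref{thm:simple}, working if convenient with the variant of the remark following it, i.e.\ after adding to $q$ a loop decrementing the second counter. Condition~(1) is immediate from the shape of $\Delta$. For condition~(3), $q(\alpha) \subseteq \post_{V_n}(s_n)$, I would run the loading phase to reach $q(a)$ and then iterate the cycle above. For condition~(4), $q(\alpha + \N_{+} e_2) \subseteq \pre_{V_n}(t_n)$, I would show that once counter $2$ is strictly above its threshold the checking phase completes all the way to $t_n$. Condition~(2), $s_n \nreaches t_n$, together with~(3), then forces $q(\alpha) \cap \pre_{V_n}(t_n) = \emptyset$: the line $\alpha$ is exactly the boundary between ``too small to pass the check'' and ``large enough to pass it.'' Theorem~\ref{thm:simple} now yields that every separator $S$ for $(V_n, s_n, t_n)$ contains a period $r \cdot \Delta$ with $r \in \Qpos$; since its proof applies to an arbitrary decomposition of $S$ into linear sets, every representation of $S$ has a linear piece with a period proportional to $\Delta$. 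As such a period lies in $\N^4 \setminus \{0\}$ and $\gcd(\Delta[1], \Delta[2]) = 1$, its norm is at least $\norm(\Delta) = \Theta(2^{2^n})$, so every representation, and hence $\size(S)$, is of doubly-exponential size.

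The main obstacle is the middle step: certifying conditions~(3) and~(4) requires reproducing enough of the internal behaviour of the gadget from~\cite{DBLP:conf/concur/Czerwinski0LLM20} to guarantee both that the \emph{entire} line $q(\alpha)$ is forward-reachable from $s_n$ and that the \emph{shifted} line $q(\alpha + \N_{+} e_2)$ backward-reaches $t_n$, while $s_n$ still cannot reach $t_n$. This is the threshold analysis of the multiplication invariant, and it is where the detailed combinatorics of the construction enters; by contrast the size bound $\Oo(n^3)$ and the norm bound $\norm(s_n), \norm(t_n) \leq 1$ are inherited directly from that construction.
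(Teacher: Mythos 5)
Your skeleton is the paper's: take the $4$-VASS family $U_n$ of~\cite{DBLP:conf/concur/Czerwinski0LLM20}, exhibit a state $q$ and a line $\alpha = a + \N\Delta$ satisfying conditions (1)--(4) of Theorem~\ref{thm:simple}, and conclude that every separator contains a period proportional to $\Delta$ whose smallest integral representative has doubly-exponential norm. But there is a genuine gap at condition (2). You take $s_n, t_n$ to be the initial and accepting configurations of the gadget itself; in $U_n$, however, $q_\inp(0^4)$ \emph{does} reach $q_\out(0^4)$ --- that is the entire point of that construction (the run exists but is doubly-exponentially long) --- so there is no separator at all and nothing to bound. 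The missing idea is the modification that the paper performs: append to $U_n$ a forced decrement of $\vr{x}_2$ by $1$ followed by a loop that decrements $\vr{x}_2$ arbitrarily. This simultaneously (i) kills reachability of $t_n = q_\out(0^4)$, since every reachable $q_\out$-configuration then satisfies $\vr{x}_2 + 1 \leq \frac{a}{b}\vr{x}_1$, which fails at $0^4$, and (ii) puts the \emph{shifted} line $q(\alpha + \N_{+}e_2)$ into $\pre(t_n)$, because the surplus unit of $\vr{x}_2$ is exactly what the new decrements absorb. Your phrase ``having a surplus in counter~$2$ is exactly what the checking phase can consume'' is backwards for the unmodified gadget: its final loop subtracts $(b,a,0,0)$ and consumes the two counters in an exact ratio, so any surplus in $\vr{x}_2$ over $\frac{a}{b}\vr{x}_1$ is fatal rather than consumable. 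The loop from the remark after Theorem~\ref{thm:simple} lives at $q$ and helps with condition (4) only; it does not restore condition (2).

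A secondary inaccuracy: the construction does not yield your ``clean target'' $\Delta = (1, \Theta(2^{2^n}), 0, 0)$ with a doubly-exponential multiplier held constant in a third counter --- that is the multiplication-triple mechanism of the \tower-hardness paper, not of the $4$-VASS. The actual line at $p_{n-1}$ has direction $\Delta = (N, N\cdot f_n^{2^n}, 0, 0)$ with slope $f_n^{2^n} = \left(\frac{a_n}{b_n}\right)^{2^n}$, a rational close to $1$; the doubly-exponential lower bound comes from the denominator, since any period $r\cdot\Delta \in \N^4$ must have its first coordinate divisible by $b_n^{2^n}$. Your general criterion (the primitive integer vector in direction $\Delta$ has doubly-exponential norm) is the right one, but the concrete line you posit is not realizable from this construction, and the threshold analysis you defer is precisely where this matters.
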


The rest of this section is devoted to the proof of Theorem~\ref{thm:4vass-sep}.
Our construction is based on the construction of a family of binary $4$-VASSes $(U_n)_{n \in \N}$
with shortest run of doubly-exponential length,
which is described in Section 5 in~\cite{DBLP:conf/concur/Czerwinski0LLM20}.
The picture below illustrates VASS $U_n$, it is taken from~\cite{DBLP:conf/concur/Czerwinski0LLM20} and modified a bit.
The proof follows a very natural line and is not a big challenge.
It boils down to performing a small modification to $4$-VASSes from~\cite{DBLP:conf/concur/Czerwinski0LLM20}
in order to ensure conditions of Theorem~\ref{thm:simple} and then checking that indeed these
conditions are satisfied. We sketch here the idea behind the construction of the mentioned family of $4$-VASSes
only into such an extend that we can explain the proof of Theorem~\ref{thm:4vass-sep},
for details we refer to~\cite{DBLP:conf/concur/Czerwinski0LLM20}.

\begin{tikzpicture}[->,>=stealth',shorten >=1pt,auto,node distance=2.4cm,semithick]

\node(sp) {$\cdot$};
\node(s) [above of = sp] {$\cdot$};

\node (pk) [right of = sp] {$p_k$};
\node (qk) [above of = pk] {$q_k$};

\node (pk1) [right = 2.5cm of pk] {$p_{k-1}$};
\node (qk1) [above of=pk1] {$q_{k-1}$};

\node (dotst) [right = 0.5cm of pk1] {$\cdots$};
\node (dots) [right = 0.1cm of dotst] {};

\node (p1) [right = 1.2cm of dots] {$\ p_1\ $};
\node (q1) [above of=p1] {$\ q_1\ $};

\node (e) [right = 1.2cm of p1] {$\ p_0\ $};

\path[->]
(s) edge[->] node[left] {\small $(1,1,0,0)$} (sp)
(sp) edge[->, in=310, out=220, loop] node[below] {\small $(1,1,0,0)$}(sp)
(sp) edge[->] node[above] {\small $(0,0,0, 2^k)$} (pk)
(pk) edge[->, in=310, out=220, loop] node[below] {\small $(0,-1,1,0)$}(pk)
(qk) edge[->, in=130, out=40, loop] node[above] {\small $(0, a_k,-b_k,0)$}(qk)
(pk) edge[->, bend left=25]  (qk)
(qk) edge[->, bend left=25] node[right] {\small $(0,0,0,-1)$} (pk)
(pk) edge[->] node[above] {\small $(0,0,0, 2^{k-1})$} (pk1)
(pk1) edge[->, in=310, out=220, loop] node[below] {\small $(0,-1,1,0)$}(pk1)
(qk1) edge[->, in=130, out=40,loop] node[above] {\small $(0, a_{k-1},-b_{k-1},0)$}(qk1)
(pk1) edge[->, bend left=25] (qk1)
(qk1) edge[->, bend left=25] node[right] {\small $(0,0,0,-1)$} (pk1)
(dots) edge[->] node[above] {\small $(0,0,0, 2^{1})$} (p1)
(p1) edge[->, in=310, out=220,loop] node[below] {\small $(0,-1,1,0)$}(p1)
(q1) edge[->, in=130, out=40, loop] node[above] {\small $(0, a_{1},-b_{1},0)$}(q1)
(p1) edge[->, bend left=25]  (q1)
(q1) edge[->, bend left=25] node[right] {\small $(0,0,0,-1)$} (p1)
(p1) edge[->] node[left] {\small } (e)
(e) edge[->, in=310, out=220, loop] node[below] {\small $(-a,-b,0,0)$}(e)
;
\end{tikzpicture}

The construction relies on the following lemma (Lemma 12 in~\cite{DBLP:conf/concur/Czerwinski0LLM20}).

\begin{lemma}\label{lem:fractions}
For each $n \geq 1$ there are $n$ rational numbers
\[
1 < f_1 < \ldots < f_n = 1 + \frac{1}{4^n}
\]
of description size bounded by $4^{n^2+n}$, such that the description size of $f$
defined as
\begin{equation}\label{eq:fractions}
f = (f_n)^{2^n} \cdot \ldots \cdot (f_2)^{2^2} \cdot (f_1)^{2^1}
\end{equation}
is bounded by $4^{2(n^2+n)}$.
\end{lemma}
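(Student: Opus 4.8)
The statement is purely number-theoretic, so the plan is to exhibit the $n$ fractions explicitly and then check the three requirements separately: the ordering and range $1 < f_1 < \cdots < f_n = 1 + \tfrac{1}{4^n}$, the individual description bound $4^{n^2+n}$, and the bound $4^{2(n^2+n)}$ on the description of the product $f$ defined in \eqref{eq:fractions}. The first two should be immediate from whatever closed form we pick; the real content is the last one. The tension to keep in mind throughout is that, for fractions of description at most $D := 4^{n^2+n}$, the product $\prod_{i=1}^n f_i^{2^i}$ has, before reduction, numerator and denominator of magnitude up to $D^{\sum_i 2^i} = D^{2^{n+1}-2}$, which is doubly exponential in $n$; since we must end with a \emph{reduced} fraction of description at most $D^2$, the fractions have to be engineered so that essentially all prime powers cancel in the product. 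Hence the construction cannot be generic: it must plant shared factors across the $f_i$ so that numerators and denominators annihilate each other once raised to the weights $2^i$.

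The natural way to force this is to aim for a telescoping identity. I would try to choose a sequence of positive integers $R_0, R_1, \ldots, R_n$ and set each $f_i$ to be the $2^i$-th root of $R_i / R_{i-1}$, so that by construction $f = \prod_{i=1}^n f_i^{2^i} = R_n / R_0$ collapses to a two-term ratio whose description is governed only by the extreme terms $R_0$ and $R_n$. Concretely one picks the $R_i$ so that each ratio $R_i/R_{i-1}$ is a perfect $2^i$-th power lying just above $1$, with its $2^i$-th root increasing in $i$ and equal to $1 + \tfrac{1}{4^n}$ at $i=n$; the monotonicity and the window $(1, 1+\tfrac{1}{4^n}]$ are then arranged by controlling how fast $R_i/R_{i-1}$ approaches $1$. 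With such a telescoping in hand, the individual bound follows by estimating a single ratio $R_i/R_{i-1}$, and the product bound reduces to showing $\max(R_n, R_0) \le 4^{2(n^2+n)}$ after reduction.

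The hard part is making this telescoping simultaneously satisfy all three constraints, and this is where I expect the main obstacle. If one realises the $R_i$ as powers of a single base, the surviving ratio $R_n/R_0$ is forced to be a power with exponent at least $\sum_i 2^i \ge 2^{n+1}$ of a rational strictly above $1$, which is again doubly exponential; so a single base is hopeless and the cancellation must be distributed over several prime bases. Moreover the top factor $f_n = \tfrac{4^n+1}{4^n}$ is pinned down, so its contribution $f_n^{2^n}$ injects the doubly-exponential prime power $(4^n+1)^{2^n}$, together with a matching power of $2$, into the product; these must be cancelled by factors deliberately placed in the denominators and numerators of the lower $f_i$, all while keeping every $f_i$ inside the narrow interval $(1,1+\tfrac{1}{4^n}]$ and of description at most $D$. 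Balancing these competing demands — enough shared structure to annihilate the doubly-exponential mass, yet small enough individual descriptions and the correct ordering — is essentially a simultaneous Diophantine-approximation problem, and carrying it out with explicit, verifiable size bounds is the crux of the lemma.
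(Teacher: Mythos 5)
Two things need saying. First, this lemma is not proved in the present paper at all: it is imported verbatim as Lemma~12 of~\cite{DBLP:conf/concur/Czerwinski0LLM20}, and the reader is referred there for the construction. So there is no in-paper proof to match your argument against; the only question is whether your text proves the statement on its own. It does not. You correctly diagnose the central tension (the unreduced product has numerator and denominator of magnitude $D^{2^{n+1}-2}$, so essentially all prime-power content must cancel), you correctly rule out a single-base telescoping (the surviving exponent would be at least $\sum_i 2^i$), and your observation that the pinned-down top factor $f_n^{2^n}=\bigl(\tfrac{4^n+1}{4^n}\bigr)^{2^n}$ forces the prime factors of $4^n+1$ to be planted in the denominators of the lower $f_i$ is sound. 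But you never exhibit the integers $R_0,\ldots,R_n$ (or any other concrete choice of the $f_i$), never verify that each $R_i/R_{i-1}$ is a perfect $2^i$-th power lying in $(1,\,1+4^{-n}]$ with the roots increasing in $i$, and never bound the reduced form of $R_n/R_0$. Your closing sentence concedes exactly this: ``carrying it out with explicit, verifiable size bounds is the crux of the lemma.'' A proof that defers its own crux is a plan, not a proof.

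To be clear about what is missing: all three claimed properties --- the ordering $1<f_1<\cdots<f_n=1+4^{-n}$, the per-fraction bound $4^{n^2+n}$, and the bound $4^{2(n^2+n)}$ on the reduced product --- remain unverified for any concrete family, and the last two are in genuine tension (the denominators of the lower $f_i$ must absorb doubly-exponential prime-power mass from $f_n^{2^n}$ while each $f_i$ individually stays below $4^{n^2+n}$ in description and inside a window of width $4^{-n}$ above $1$). Resolving that tension is the entire content of the lemma, and it is the part you have not supplied. If you want to complete this, you should either work out an explicit telescoping family and check the three bounds, or simply cite Lemma~12 of~\cite{DBLP:conf/concur/Czerwinski0LLM20} as the paper does.
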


The VASSes $U_n$ are constructed as follows.
Valuation of the four counters $(\vr{x}_1, \vr{x}_2, \vr{x}_3, \vr{x}_4)$ in the distinguished \emph{initial} state is initially $0^4$,
the run is \emph{accepting} if it finishes in the distinguished \emph{final} state also with valuation $0^4$.
Each run of $U_n$ consists of $n+2$ phases: the \emph{initial phase},
$n$ phases corresponding to fractions $f_n, f_{n-1}, \ldots, f_2, f_1$, respectively and the \emph{final phase}.
In each run after the initial phase the counter valuation is equal to $(N, N, 0, 0)$ for some nondeterministically guessed value $N \in \N$.
In every accepting run the phase corresponding to the fraction $f_i$ results only in multiplying
the second counter $\vr{x}_2$ by a value $f_i^{2^i}$. Therefore in such an accepting run
counter valuation after phases corresponding to fractions $f_n, \ldots, f_i$ is the following
\[
(\vr{x}_1, \vr{x}_2, \vr{x}_3, \vr{x}_4) = (N, N \cdot f_n^{2^n} \cdot \ldots \cdot f_i^{2^i}, 0, 0).
\]
In particular after all the $n$ phases corresponding to fractions the second counter is equal to
\[
N \cdot f_n^{2^n} \cdot \ldots \cdot f_1^{2^1} = N \cdot f,
\]
where the equality follows from Lemma~\ref{lem:fractions}.
Let $p_i$ be the state of $U_n$ after the initial phase, $n-i$ phases corresponding to fractions $f_n, \ldots, f_{i+1}$
and just before the phase corresponding to fraction $f_i$ (in the case when $i > 0$).
It is important that due to Claim 15 in~\cite{DBLP:conf/concur/Czerwinski0LLM20} any reachable configuration
of the form $p_i(\vr{x}_1, \vr{x}_2, \vr{x}_3, \vr{x}_4)$ satisfies $\vr{x}_2 \leq f_n^{2^n} \cdot \ldots \cdot f_i^{2^i} \cdot \vr{x}_1$.
In particular for any reachable configuration $p_0(\vr{x}_1, \vr{x}_2, \vr{x}_3, \vr{x}_4)$ we have that
$\vr{x}_2 \leq f \cdot \vr{x}_1$, as $f = f_n^{2^n} \cdot \ldots \cdot f_1^{2^1}$.
Let $f = \frac{a}{b}$. In the final phase the transition with the effect $(-b, -a, 0, 0)$ is applied in a loop. One can easily see that in order
to reach counter values $0^4$ after the final phase we need to have in the state $p_0$ values satisfying
the equality $\vr{x}_2 = f \cdot \vr{x}_1$. Similarly any configuration $q_{n-1}(\vr{x}_1, \vr{x}_2, \vr{x}_3, \vr{x}_4)$
on an accepting run needs to satisfy $\vr{x}_2 = f_n^{2^n} \cdot \vr{x}_1$. Let $f_i = \frac{a_i}{b_i}$ for all $i \in [1,n]$.
Then in any accepting run $\vr{x}_1$ needs to be divisible by $b_n^{2^n}$, which is a doubly-exponential number wrt. $n$. This forces
any accepting run of $U_n$ to be doubly-exponential.

We slightly modify $4$-VASSes $U_n$ in order to obtain $4$-VASSes $V_n$, which fulfil conditions of Theorem~\ref{thm:simple}.
VASS $V_n$ is obtained from $U_n$ by adding at the end of $U_n$ two instructions:
1) decrease of $\vr{x}_2$ by $1$; and then 2) a loop, which decreases $\vr{x}_2$ by an arbitrary nonnegative number.
We first show that $V_n$ indeed satisfies conditions of Theorem~\ref{thm:simple} and then we argue how this finishes
the proof of Theorem~\ref{thm:4vass-sep}.
Let $q_\inp$ be the initial state of VASS $U_n$, $q_\out$ be its final state and $q_\last$ be the state after the final phase,
but before applying the above mentioned decrements of $\vr{x}_2$.
We set $s_n = q_\inp(0^4)$ and $t_n = q_\out(0^4)$. Let $N = \prod_{i=1}^n b_i^{2^i}$.
We set $\Delta = (N, N \cdot f_n^{2^n}, 0, 0) \in \N^4$ and fix state $q = q_{n-1}$.
We claim that VASS $V_n$ together with two configurations $s_n$ and $t_n$,
state $q$ and vector $\Delta$ satisfies conditions of Theorem~\ref{thm:simple}.

Condition (1) is satisfied trivially as $\Delta = (N, N \cdot f_n^{2^n}, 0, 0)$.
In order to see that (2) is satisfied recall that if $V_n$ reaches $q_0(\vr{x}_1, \vr{x}_2, \vr{x}_3, \vr{x}_4)$ then
we have that $\vr{x}_2 \leq \frac{a}{b} \vr{x}_1$. In the final phase we subtract in the loop vector $(b, a, 0, 0)$,
but it does not change this inequality. Finally subtracting at least $1$ at $\vr{x}_2$ implies that any
reachable configuration $q_\out(\vr{x}_1, \vr{x}_2, \vr{x}_3, \vr{x}_4)$ satisfies $\vr{x}_2 + 1 \leq \frac{a}{b} \cdot \vr{x}_1$.
This is not true for $\vr{x}_1 = \vr{x}_2 = 0$, which shows that condition (2) is indeed satisfied.
Observe now that for each $k \in \N$ configuration $p_n(kN, kN, 0, 0)$ after the initial phase is reachable from $s_n$.
Thus also configuration $p_{n-1}(kN, kN\cdot f_n^{2^n}, 0, 0)$ is reachable from $s_n$ if we multiply the second counter by $f_n^{2^n}$,
which proves condition (3).
Observe also that starting in a configuration $p_{n-1}(kN, kN\cdot f_n^{2^n}, 0, 0)$ one can reach valuation $q_\last(0^4)$ before
the decrements of $\vr{x}_2$. Therefore after adding $e_2 = (0, 1, 0, 0)$ to both sides
we get that $p_{n-1}(kN, kN\cdot f_n^{2^n}+1, 0, 0) \reaches q_\last(0,1,0,0) \reaches t_n$ for any $k \in \N$.
Because of an option of decreasing $\vr{x}_2$ many times in $q_\last$ we get that 
$p_{n-1}(kN, kN\cdot f_n^{2^n} + \ell, 0, 0) \reaches t_n$ for any $k, \ell \geq 1$, which equivalent to the condition (4).

Therefore by Theorem~\ref{thm:simple} we have that each separator for $(V_n, s_n, t_n)$
contains a period $\Delta \cdot r \in \N^4$ for some $r \in \Q$. Let $\Delta \cdot r = (\vr{x}_1, \vr{x}_2, 0, 0) \in \N^4$.
We know that $\vr{x}_2 = \vr{x}_1 \cdot \Big(\frac{a_n}{b_n}\Big)^{2^n}$, so in order for $\vr{x}_2 \in \N$
we need to have $b_n^{2^n} \mid \vr{x}_1$. This implies that $\vr{x}_1$ is doubly-exponential with respect to $n$.
Therefore size of the period $\Delta \cdot r$ and thus size of the separator is doubly-exponential with respect to $n$.

\subsection{Tower size separators}\label{ssec:tower}
In this section we show the following result.

\begin{theorem}\label{thm:tower-sep}
There exists a family of VASSes $(V_n)_{n \in \N}$ of size polynomial wrt. $n$
such that for some configurations $s_n, t_n$ of $V_n$ with $\norm(s_n) = \norm(t_n) = 0$ the following is true:
\begin{itemize}
  \item $s_n \reaches t_n$, but the shortest run is $n$-fold exponential,
  \item $s_n \nreaches t_n + e$ for some elementary vector $e$ and each separator
  for $(V_n, s_n, t_n + e)$ is of at least $n$-fold exponential size.
\end{itemize}
\end{theorem}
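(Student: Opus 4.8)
The plan is to mirror the argument of Section~\ref{ssec:4vass}: locate inside the \tower-hardness family a state carrying a fixed-ratio invariant, append a tiny draining gadget, verify the four hypotheses of Theorem~\ref{thm:simple} for a shifted target, and read off the separator lower bound from the conclusion. The first bullet of the statement will be inherited directly from the source construction.

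First I would recall the construction of~\cite{DBLP:conf/stoc/CzerwinskiLLLM19}. It yields a family of VASSes of size polynomial in $n$, with a distinguished initial state $q_\inp$ and final state $q_\out$, such that $q_\inp(0^d) \reaches q_\out(0^d)$ and the shortest such run is $n$-fold exponential; this is exactly the first bullet with $s_n = q_\inp(0^d)$ and $t_n = q_\out(0^d)$. The hardness is produced by the iterated multiplication-triples mechanism, which forces, along every accepting run, some counter to reach an $n$-fold exponential value $R$. As in the four-dimensional case this pins, at a suitable intermediate state $q$, a fixed ratio between two unbounded counters: reachable configurations $q(\vr{x})$ satisfy $\vr{x}[2] \le R \cdot \vr{x}[1]$, with equality forced on every run that later drains to $q_\out$, where I have renamed the two relevant counters to coordinates $1$ and $2$. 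Setting $\Delta = (\Delta[1], \Delta[2], 0^{d-2})$ with $\Delta[2]/\Delta[1] = R$ and $\alpha = a + \N\Delta$ for an appropriate base $a$, pumping the initial guess shows $q(\alpha) \subseteq \post(s_n)$.

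Then I would define $V_n$ as this VASS with one added loop at $q_\out$ decreasing the second counter, and take $e = e_2$, so that the target under consideration is $t_n + e_2 = q_\out(e_2)$. I claim the four conditions of Theorem~\ref{thm:simple} hold for $(V_n, s_n, t_n + e_2)$. Condition (1) is immediate from the choice of $\Delta$, and (3) was just observed. For (4) I would argue that an excess in coordinate $2$ at $q$ survives the subsequent draining phase and reappears as an excess in coordinate $2$ at $q_\out$, so that from $q(\alpha + \N_{+} e_2)$ one first reaches $q_\out$ with coordinate $2$ equal to that surplus $\ell \ge 1$ and then fires the added loop to bring it down to exactly $1$, reaching $q_\out(e_2)$. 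Finally (2): from $s_n$ the invariant $\vr{x}[2] \le R\cdot\vr{x}[1]$ holds at $q$ with equality forced on any run reaching $q_\out$, so coordinate $2$ can never be made positive at $q_\out$; the added loop only \emph{decreases} coordinate $2$ and cannot help, whence $s_n \nreaches t_n + e_2$. Note also that the draining loop does not shorten any run ending in $t_n = q_\out(0^d)$, so the first bullet survives the modification.

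Applying Theorem~\ref{thm:simple} then gives that every separator for $(V_n, s_n, t_n + e_2)$ contains a period $r\cdot\Delta$ with $r \in \Q$. Writing $R = \frac{A}{B}$ in lowest terms, integrality of $r\Delta \in \N^d$ forces $B \mid (r\Delta)[1]$, so the period has a coordinate of size at least $\max(A, B)$, which is $n$-fold exponential; hence the separator has $n$-fold exponential size, as required. The main obstacle is the third paragraph: unlike the single multiplication of the four-dimensional example, here the tower value $R$ is assembled by many nested multiplication gadgets, so the real work is to exhibit one state $q$ at which the full $n$-fold exponential ratio is already pinned and to check that surplus in the ratio counter is transported unchanged through the remaining gadgets down to $q_\out$ (condition (4)); everything else is a routine transcription of Section~\ref{ssec:4vass}.
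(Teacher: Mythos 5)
Your overall strategy --- transplant the argument of Section~\ref{ssec:4vass}, find a state where a tower-sized ratio is pinned, shift the target by a unit vector and invoke Theorem~\ref{thm:simple} --- is exactly the paper's strategy, and your handling of the first bullet and of the final size computation is essentially fine. The genuine gap is the \emph{orientation} of the ratio invariant, and it is not the part you flagged as ``the real work''. In the multiplication-triples mechanism of~\cite{DBLP:conf/stoc/CzerwinskiLLLM19} a triple $(\vr{x},\vr{y},\vr{z})=(B,C,BC)$ pays for zero-tests by decrementing $\vr{y}$ by exactly $2$ while $\vr{z}$ drops by \emph{at most} $2B$; cheating therefore leaves the product counter too \emph{large}, so the invariant on reachable configurations is a \emph{lower} bound $\vr{z}\geq B\vr{y}$ (equivalently, an upper bound on $\vr{y}$ in terms of $\vr{z}$, which is how Claim~\ref{cl:tower} is stated). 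This is the opposite of the $4$-VASS of~\cite{DBLP:conf/concur/Czerwinski0LLM20}, where the multiplied counter is bounded \emph{above} by $f\cdot\vr{x}_1$. Your proposal asserts $\vr{x}[2]\leq R\cdot\vr{x}[1]$ for the large counter and accordingly places the surplus, the shift $e_2$ and the added draining loop on it. With that placement hypothesis (2) of Theorem~\ref{thm:simple} fails: the base construction lets $q_\out$ drain the small counter of the last triple, and dishonest runs reach the relevant state with the product counter strictly exceeding $R$ times the small one, so one arrives at $q_\out$ with the product counter still positive and burns it down to exactly $1$ with your new loop; hence $s_n\reaches t_n+e_2$ and there is nothing to separate.

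The paper resolves this by flipping the roles. It takes $e=e_{\vr{y}_n}$ on the \emph{small} coordinate of the last triple $(3!^n,k,3!^n\cdot k)$, works at a state $q_\decr$ carrying a coupled loop of effect $(0,-1,-3!^n)$ on $(\vr{x}_n,\vr{y}_n,\vr{z}_n)$ so that $\Delta=(0,1,3!^n,0^{d-3})$, and lets $q_\out$ drain $\vr{x}_n$ and $\vr{y}_n$ but \emph{not} $\vr{z}_n$. Non-reachability of $t_n+e$ then follows because leaving $q_\decr$ with $\vr{z}_n=0$ forces $\vr{y}_n=0$ (Claim~\ref{cl:tower}), while a surplus on $\vr{y}_n$ survives the coupled drain and is disposed of at $q_\out$, which gives condition (4). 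Note also that here $R=3!^n$ is an integer, so your divisibility-of-the-denominator step yields nothing; the $n$-fold exponential bound comes from the coordinate $r\cdot 3!^n$ of the period itself (your $\max(A,B)$ fallback does cover this). So the skeleton is right, but the construction as you describe it violates hypothesis (2), and repairing it requires choosing the other counter for the shift and restricting the drains at $q_\out$, not merely verifying transport of the surplus through the gadgets.
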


The rest of this section is dedicated to prove Theorem~\ref{thm:tower-sep}.
Construction of VASSes $V_n$ is based on constructions in~\cite{DBLP:conf/stoc/CzerwinskiLLLM19},
but we do not follow exactly~\cite{DBLP:conf/stoc/CzerwinskiLLLM19} in order to avoid
some technicalities and simplify the construction.
As the construction~\cite{DBLP:conf/stoc/CzerwinskiLLLM19} is pretty involved we decided
only to sketch the intuition behind the constructed VASSes and use many of its properties
without a detailed explanation. The main goal of this section is to provide an intuition why Theorem~\ref{thm:simple}
is indeed applicable to that case, for details of the construction we refer to the original paper~\cite{DBLP:conf/stoc/CzerwinskiLLLM19}.
Essentially speaking Theorem~\ref{thm:simple} is applicable to VASSes in which the configurations on the accepting run
are distinguished from the others by keeping some specific ratio of counter values. The bigger the description
size of the ratio the bigger the separator. In VASSes from~\cite{DBLP:conf/stoc/CzerwinskiLLLM19} the description size
is $n$-fold exponential, which implies the lower bound on the size of separators.

We say that a counter is $B$-bounded if its value is upper bounded by $B$ along the whole run.
In~\cite{DBLP:conf/stoc/CzerwinskiLLLM19} counters are assured to be $B$-bounded (for various values of $B$)
in the following way. In order to guarantee that counter $\vr{x}$ initially equal to $0$ is $B$-bounded
we introduce another counter $\vr{\bar{x}}$ which is initialised to $B$ and an invariant
$\vr{a} + \vr{\bar{a}} = B$ is kept throughout the whole run.
The construction of~\cite{DBLP:conf/stoc/CzerwinskiLLLM19} strongly relies on the fact that having
a triple $(\vr{x}, \vr{y}, \vr{z}) = (B, C, BC)$ one can simulate $C / 2$ zero-tests for a $B$-bounded counter.
One zero-test for counter $\vr{a}$ is realised as follows. We decrease $\vr{y}$ by $2$. Then we enter two loops, the first
one with the effect $(-1, 1, -1)$ on counters $(\vr{a}, \vr{\bar{a}}, \vr{z})$
and the second one with the effect $(1, -1, -1)$ on the same counters. It is easy to see that for a $B$-bounded
counter $\vr{a}$ the maximal possible decrease on $\vr{z}$ after these two loops is equal to $2B$ and it can
only be realised if before and after the loops we have $(\vr{a}, \vr{\bar{a}}) = (0, B)$.

Let $3!^n$ be the number $3$ followed by $n$ applications of the factorial function.
For example $3!^0 = 3$, $3!^1 = 6$ and $3!^2 = 720$.
Roughly speaking VASS in~\cite{DBLP:conf/stoc/CzerwinskiLLLM19} consists of a sequence of $n$ gadgets,
such that in every accepting run they compute triples of the form $(3!^i, k_i, 3!^i \cdot k_i)$
for some nondeterministically guessed values $k_i \in \N$.
The first gadget $\B$ computes triple $(3, k_1, 3 \cdot k_1)$ in a very easy way: after increasing its counters by $(3, 0, 0)$
it fires a nondeterministically guessed number $k_1$ of times a loop with the effect equal to $(0, 1, 3)$.
Next we have a sequence of $n-1$ gadgets $\F$, the $i$-th one inputing a triple $(3!^i, k_i, 3!^i \cdot k_i)$
and outputting a triple $(3!^{i+1}, k_{i+1}, 3!^{i+1} \cdot k_{i+1})$ on some other set of three counters.
It is important to mention that correct value of the output triple requires that after the run values of the input triple are all zero.
The last triple $(3!^n, k_n, 3!^i \cdot k_n)$ is used in~\cite{DBLP:conf/stoc/CzerwinskiLLLM19}
to simulate $3!^n$-bounded counters of a counter automaton.
Here however we modify this construction in order to obtain VASSes $V_n$,
which fulfil the conditions of Theorem~\ref{thm:tower-sep}.
As all the triples use different counters the presented VASS has at least dimension $3n$.
In the original construction of~\cite{DBLP:conf/stoc/CzerwinskiLLLM19}
some of the counters were actually reused in order to decrease the dimension.
Here we allow for a wasteful use of counters, this however do not change the idea of the construction.

The family of VASSes $V_n$ is defined as follows.
We distinguish an initial state $q_\inp$ of $V_n$ and a final state $q_\out$ of $V_n$.
For $i \in [1,n]$ let $q_i$ be the state after the $i$-th gadget. From the above description we get that
in state $q_i$ valuation of some three counters is equal to $(3!^i, k_i, 3!^i \cdot k_i)$. Let us denote
these counters $(\vr{x}_i, \vr{y}_i, \vr{z}_i)$. Thus in $q_n$ we reach counter values
$(\vr{x}_n, \vr{y}_n, \vr{z}_n) = (3!^n, k_n, 3!^n \cdot k_n)$ for some $k_n \geq 0$.
Then after state $q_n$ we define a state $q_\decr$ in which we decrease values of $\vr{y}_n$ and $\vr{z}_n$
by applying some nonzero number of zero-tests for $3!^n$-bounded counters.
This operation can be seen as a loop decreasing counters $(\vr{x}_n, \vr{y}_n, \vr{z}_n)$ by $(0, 1, 3!^n)$,
but of course subtracting $(0, 1, 3!^n)$ is not realised by a single transition, but by some smaller sub-gadget of our VASS.
Then we go to a state $q_\out$ in which we can decrease in a loop the counter $\vr{x}_n$
and as well the counter $\vr{y}_n$.
We define $s_n = q_\inp(0^d)$ and $t_n = q_\out(0^d)$ for an appropriate dimension $d$.
We set a vector $e$ to be zero on all the coordinates beside $\vr{y}_n$ on which it is set to be one.
We claim that $V_n$ with configurations $s_n$, $t_n$ and an elementary vector $e$ satisfy
the conditions of Theorem~\ref{thm:tower-sep}. It is easy to see
(assuming all the above remarks about the construction of~\cite{DBLP:conf/stoc/CzerwinskiLLLM19})
that all the accepting runs need to traverse through a configuration $q_n(3!^n, k, 3!^n \cdot k)$ for some $k \geq 1$,
which implies that all the accepting runs have at least $n$-fold exponential length.
It therefore remains to show the second point of Theorem~\ref{thm:tower-sep},
we apply Theorem~\ref{thm:simple} for that purpose.

As counters $\vr{x}_n$, $\vr{y}_n$ and $\vr{z}_n$ are important let us assume wlog. that they correspond to the first three
coordinates in our notation, respectively.
Let $\Delta = (0, 1, 3!^n, 0^{d-3}) \in \N^d$, namely $\Delta[i] = 0$ for all $i \not\in \{\vr{y}_n, \vr{z}_n\}$, $\Delta[\vr{y}_n] = 1$
and $\Delta[\vr{z}_n] = 3!^n$.  Properties of $V_n$ can be summarised in the following claim, which can be
derived from~\cite{DBLP:conf/stoc/CzerwinskiLLLM19}.

\begin{claim}\label{cl:tower}
If $s_n \reaches q_\decr(x, y, z, 0^{d-3})$ then $x = 3!^n$ and $y \leq 3!^n z$.
Moreover for any $y \in \N$ we have $s_n \reaches q_\decr(3!^n, y, 3!^n \cdot y, 0^{d-3})$.
\end{claim}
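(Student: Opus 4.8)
The plan is to derive both parts of Claim~\ref{cl:tower} from the correctness properties of the gadget chain of~\cite{DBLP:conf/stoc/CzerwinskiLLLM19}, adding only the bookkeeping for the decrement performed in state $q_\decr$. Concretely, I would import two invariants about configurations reachable at state $q_n$: (i) every such configuration has $\vr{x}_n = 3!^n$, guaranteed by the bounded-counter mechanism that keeps an auxiliary counter $\vr{\bar x}_n$ with $\vr{x}_n + \vr{\bar x}_n = 3!^n$ throughout the run; and (ii) every such configuration satisfies $\vr{z}_n \ge 3!^n \cdot \vr{y}_n$, i.e.\ the product counter is never below its honest value. Invariant (ii) is the analogue of the reachability invariants (of the flavour of Claim~15 in~\cite{DBLP:conf/concur/Czerwinski0LLM20}) established for these constructions, and it is the one dictated by the zero-test fuel argument: a dishonest step can only fail to spend $\vr{z}_n$, never spend more than is available.

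For the first (soundness) part, consider a run $s_n \reaches q_\decr(x,y,z,0^{d-3})$. It enters $q_\decr$ from a configuration $q_n(3!^n, Y, Z)$ with $Z \ge 3!^n Y$ by invariants (i)--(ii), and then applies the decrement operation, which subtracts some $m \ge 0$ from $\vr{y}_n$ and at most $3!^n$ from $\vr{z}_n$ per unit of $\vr{y}_n$ removed (since a single zero-test lowers $\vr{z}$ by at most $2B = 2\cdot 3!^n$, as recalled above). As $\vr{x}_n$ is untouched we immediately get $x = 3!^n$, while $y = Y - m$ exactly and
\[
z \;\ge\; Z - 3!^n m \;\ge\; 3!^n Y - 3!^n m \;=\; 3!^n y .
\]
Hence $z \ge 3!^n y$, which is stronger than the stated $y \le 3!^n z$ (the latter being all that is needed, and only at $z = 0$, where it forces $y = 0$).

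For the second (completeness) part, I would invoke the positive direction of the gadget correctness: for every $Y \in \N$ the chain of gadgets, run honestly with the nondeterministic guesses $k_1,\ldots,k_n$ chosen so that $k_n = Y$, reaches $q_n(3!^n, Y, 3!^n Y, 0^{d-3})$. Taking any $Y \ge y+1$, moving to $q_\decr$ and executing the decrement operation honestly $Y - y \ge 1$ times then removes exactly $(0, Y-y, 3!^n(Y-y), 0^{d-3})$, landing in $q_\decr(3!^n, y, 3!^n y, 0^{d-3})$ as required; the nonzero number of decrements matches the ``nonzero number of zero-tests'' in the definition of $q_\decr$.

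The main obstacle is not the bookkeeping above but faithfully extracting invariants (i) and (ii)---in particular the correct \emph{direction} of the one-sided bound in (ii)---from the intricate construction of~\cite{DBLP:conf/stoc/CzerwinskiLLLM19}, since the whole argument hinges on the product counter being bounded \emph{below} by its honest value and on the decrement lowering $\vr{z}_n$ by \emph{at most} $3!^n$ per unit of $\vr{y}_n$. Once these two one-sided facts are in place, both parts of the claim follow by the elementary estimates above.
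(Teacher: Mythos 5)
The paper gives no proof of this claim at all---it is simply asserted to be ``derivable from''~\cite{DBLP:conf/stoc/CzerwinskiLLLM19}---so there is no official argument to compare yours against; your sketch is the natural reconstruction, it imports exactly the two invariants the authors implicitly rely on, and the bookkeeping for the $q_\decr$ phase is correct. Two remarks. First, you in fact establish the stronger inequality $3!^n\, y \le z$, and this is almost certainly what the authors intend: the way the claim is invoked later (``this is however a contradiction with Claim~\ref{cl:tower}, as then $y \cdot 3!^n > 0$'' against $z=0$) only parses for the form $3!^n\, y \le z$, so the printed $y \le 3!^n z$ is best read as a weakening (or typo) of your invariant; your version implies it since $3!^n \ge 1$. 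Second, a small imprecision: the complementary-counter relation $\vr{x}_n + \vr{\bar x}_n = 3!^n$ by itself only gives $x \le 3!^n$, not $x = 3!^n$; to get equality you should use the hypothesis that the configuration reached is $q_\decr(x,y,z,0^{d-3})$, i.e.\ that $\vr{\bar x}_n$ (being one of the remaining coordinates) equals $0$. With that link made explicit, your argument is at the same level of rigour as the paper's, and you correctly identify that the only genuinely hard content---the direction of the one-sided product invariant at $q_n$---must be extracted from the cited construction rather than proved here.
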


Now we aim at showing that VASS $V_n$ together with configurations $s_n$, $t_n+e$,
vector $\Delta$ and state $q_\decr$ fulfils conditions of Theorem~\ref{thm:simple}.
It is immediate to see that condition (1) is satisfied as only second and third coordinates in $\Delta$ are nonzero
(we allow for reordering the coordinates in Theorem~\ref{thm:simple} without loss of generality).
In order to show (2) we rely on Claim~\ref{cl:tower}. We have $t_n + e = q_\out(0, 1, 0, 0^{d-3})$,
thus if $s_n \reaches t_n + e$ we need to have $s_n \reaches q_\decr(x, y, 0, 0^{d-3}) \reaches t_n + e$
for some $y > 0$. This is however a contradiction with Claim~\ref{cl:tower}, as then $y \cdot 3!^n > 0$.
By Claim~\ref{cl:tower} we also immediately derive condition (3). To show condition (4) notice
that because of the loop in $q_\decr$ of effect $(0, -1, -3!^n)$ on counters $(\vr{x}_n, \vr{y}_n, \vr{z}_n)$
we have
\[
q_\decr(0, k+\ell, k \cdot 3!^n, 0^{d-3}) \reaches q_\decr(0, \ell, 0^{d-2}) \reaches q_\out(0, \ell-1, 0^{d-2})
\reaches q_\out(0, 1, 0^{d-2}) = t_n + e 
\]
for any $k, \ell \geq 1$. This shows that indeed Theorem~\ref{thm:simple} can be applied to $V_n$.
Thus each separator for $(V_n, s_n, t_n + e)$ contains a period of a form $(r, r \cdot 3!^n, 0, 0^{d-3}) \in \N$.
As $r \in \N$ the number $r \cdot 3!^n$ is $n$-fold exponential and thus the size of any separator for $(V_n, s_n, t_n+e)$
is $n$-fold exponential, which finishes the proof of Theorem~\ref{thm:tower-sep}.

\section{Generalised theorem}\label{sec:advanced}
Before we start introducing the notions needed for Theorem~\ref{thm:advanced} we motivate
the need of a generalisation of Theorem~\ref{thm:simple}. 
Formulation of Theorem~\ref{thm:simple} is rather simple and it is sufficient for our applications, but it may seem a bit arbitrary.
As mentioned in the paragraph before Theorem~\ref{thm:simple} it is well suited to the situation
when the ratio $\frac{x}{y}$ between some two counters $x$ and $y$ is fixed and bounded at some specific moment at the run.
We can however very easily imagine that in some other, but very related VASS 
some a bit more involved ratio is kept fixed. This can be a ratio of a form
$\frac{x_1 + x_2}{y_1 + y_2 + y_3}$ or something similar. It looks very natural that we may
keep constant a ratio not between two counter values, but between sums of a few counter values.
This motivates introduction of the linear functions defined below and use of them in Theorem~\ref{thm:advanced}.
When we deal with more than two counters we cannot easily speak about lines, which was natural in Theorem~\ref{thm:simple}.
This is the reason why we are forced to use a more abstract language in order to be prepared
for the above mentioned simple applications. Theorem~\ref{thm:advanced} can be seen
as a more powerful tool than Theorem~\ref{thm:simple}. At the moment we do not see
any applications in which Theorem~\ref{thm:advanced} is needed. However in our opinion
it is important to show that the techniques presented in Theorem~\ref{thm:simple} can be quite
easily extended to stronger Theorem~\ref{thm:advanced}.

Before stating Theorem~\ref{thm:advanced} we introduce a few notions.

\subparagraph*{Greatest common divisors}
We state here a fact about greatest common divisors, which is helpful in the sequel.
By $\gcd(a_1, \ldots, a_k)$ we denote the greatest common divisor of all the numbers $a_1, \ldots, a_k$.

%

\begin{claim}\label{cl:gcd}
For all natural numbers $a_1, \ldots, a_k \leq M$ and for each $S \geq k(M^2-M)$ which is divisible by $\gcd(a_1, \ldots, a_k)$
there exist nonnegative coefficients $b_1, \ldots, b_k \in \N$ such that $S = a_1 b_1 + \ldots + a_k b_k$.
\end{claim}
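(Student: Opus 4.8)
The plan is to prove this by reduction to the classical coin problem (Chicken McNugget / Frobenius) via the greatest common divisor. Let $g = \gcd(a_1, \ldots, a_k)$. The first step is to reduce to the case where the numbers are coprime: since $g \mid a_i$ for each $i$ and $g \mid S$ by hypothesis, I would divide everything through by $g$, setting $a_i' = a_i/g$ and $S' = S/g$. Then $\gcd(a_1', \ldots, a_k') = 1$, each $a_i' \leq M/g \leq M$, and the claim $S = \sum a_i b_i$ is equivalent to $S' = \sum a_i' b_i$. So it suffices to handle the coprime case and track how the bound $k(M^2 - M)$ transforms.

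The heart of the argument is the coprime case. The standard two-variable fact is that for coprime $a, b$ every integer $\geq (a-1)(b-1)$ is a nonnegative combination $ax + by$; more crudely, every integer $\geq ab$ works, and even more crudely $\geq (a-1)(b-1) \geq 0$. For the multivariable version I would proceed as follows. Assume without loss of generality that all $a_i$ are nonzero (zero entries contribute nothing to the gcd and nothing to the sum, so they can be discarded, only possibly decreasing the effective $k$). Because $\gcd(a_1, \ldots, a_k) = 1$, by B\'ezout there exist integers (possibly negative) $c_1, \ldots, c_k$ with $\sum a_i c_i = 1$. The difficulty is converting this integer representation into a \emph{nonnegative} one for all sufficiently large $S$, and doing so with an explicit, clean bound of the stated shape.

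The cleanest route to the explicit bound is an inductive argument on $k$, reducing to the two-variable Frobenius estimate at each step. Here is the key idea: I would group the numbers and use the identity $\gcd(a_1, \ldots, a_k) = \gcd\bigl(\gcd(a_1, \ldots, a_{k-1}), a_k\bigr)$. Writing $d = \gcd(a_1, \ldots, a_{k-1})$, the two-variable result says any multiple of $\gcd(d, a_k) = 1$ that is at least $(d-1)(a_k-1)$ is of the form $d \cdot u + a_k \cdot b_k$ with $u, b_k \in \N$; then I need $d \cdot u$ to be realisable as $\sum_{i<k} a_i b_i$, which by induction (applied to $a_1, \ldots, a_{k-1}$, whose gcd is $d$, after dividing by $d$) holds once $u$ is large enough. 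Chasing the bounds through, each of the $k$ steps contributes at most roughly $M^2 - M$ (since each pairwise Frobenius threshold $(a-1)(b-1) \leq (M-1)^2$, and I would absorb the bookkeeping into the factor $k$), which is exactly why the final threshold is $k(M^2 - M)$. The main obstacle I anticipate is the careful bookkeeping: making the induction hypothesis strong enough (it must assert realisability for \emph{all} admissible $S$ above the threshold, not just one value) and verifying that the thresholds accumulate additively rather than multiplicatively, so that the bound stays $k(M^2-M)$ rather than blowing up. I would handle this by stating the induction hypothesis uniformly as ``every multiple of $\gcd(a_1,\dots,a_j)$ exceeding $j(M^2-M)$ is a nonnegative combination'' and checking the single inductive step closes with the two-variable estimate plus the crude bound $(a-1)(b-1) \le M^2 - M$ for $a, b \le M$.
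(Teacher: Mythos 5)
Your proof takes a genuinely different route from the paper's, and it does work, but the comparison is worth spelling out. The paper does not induct on $k$ at all: it takes an arbitrary integer solution of $S = \sum_i a_i b_i$ (which exists since $\gcd(a_1,\ldots,a_k) \mid S$), picks one minimising $\sum_{i : b_i < 0} |b_i|$, and runs a local exchange argument --- if some $b_i < 0$, the size of $S$ forces some $b_j \geq M$, and replacing $(b_i, b_j)$ by $(b_i + a_j,\, b_j - a_i)$ preserves the sum while strictly decreasing the negativity measure, a contradiction. Your argument instead peels off one generator at a time, writing $d = \gcd(a_1,\ldots,a_{k-1})$ and reducing to the two-variable case. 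One caution on your route: the Sylvester--Frobenius bound $(d-1)(a_k-1)$ only guarantees \emph{some} representation $S = du + a_k b_k$; it does not control how the weight splits, and you need $du \geq (k-1)(M^2-M)$ to feed the induction. The clean fix --- which is exactly the ``bookkeeping'' you flagged --- is to choose $b_k$ as the least nonnegative solution of $a_k b_k \equiv S \pmod{d}$, so that $a_k b_k \leq M(M-1)$ and the remainder $S - a_k b_k$ is a multiple of $d$ exceeding $(k-1)(M^2-M)$; with that choice the thresholds accumulate additively and you recover $k(M^2-M)$ exactly. Your induction is more modular and generalises transparently (it isolates the per-generator cost $M^2-M$), whereas the paper's exchange argument is shorter and avoids any recursion on the gcd chain; both are elementary and yield the same bound.
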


\begin{proof}
The following fact is called Bezout's identity and can be easily proved by induction on $k$: 
for each $a_1, \ldots, a_k \in \N$ there exist some coefficients $b_1, \ldots, b_k \in \Z$ such
that $\gcd(a_1, \ldots, a_k) = \sum_{i=1}^k a_i b_i$.  Let us take such a solution which minimises the sum $\sum_{i: b_i < 0} |b_i|$.
We show that in this solution actually all the $b_i$ are nonnegative. Assume otherwise and let $b_i < 0$ for some $i \in [1,k]$.
As $S \geq k(M^2-M)$ then for some $b_j \in [1,k]$ we have $b_j \geq M$. Then substituting $b_j$ by $b_j - a_i$ and
$b_i$ by $b_i + a_j$ we obtain a solution with smaller $\sum_{i: b_i < 0} |b_i|$, contradiction.
\end{proof}

\subparagraph*{Linear functions}
Let a linear function $\lin: \N^d \to \N$ be of a form $\lin(x_1, \ldots, x_d) = \sum_{i=1}^d n_i x_i$, with all the coefficients $n_i \in \N_+$.
We call a linear function \emph{reduced} if $\gcd(n_1, \ldots, n_d) = 1$, let us assume that $\lin$ is reduced.
Notice that each linear function is a reduced linear function multiplied by some natural number.
Let $M = \max_{i \in [1,d]} n_i$.
The \emph{support} of a linear function $\supp(\lin) \subseteq [1,d]$ is the set of coordinates
for which coefficient $n_i$ is nonzero.
Let the set of vectors $\zero(\lin) \subseteq \Z^d$ contain all the vectors $n_j e_i - n_i e_j$
for $i, j \in \supp(\lin)$. Clearly for any $v \in \zero(\lin)$ and any $u \in \N^d$ such that
$u+v \in \N^d$ we have $\lin(u+v) = \lin(u)$.
The following claim tells that set $\zero(\lin)$ in some way spans the set of vectors with the same value
of $\lin$ in case it is big.

\begin{claim}\label{cl:zero-run}
For any $u, v \in \N^d$ such that $\lin(u) = \lin(v) \geq d \cdot M^3$ there is a sequence of
vectors $u = x_0, x_1, \ldots, x_k = v \in \N^d$ such that for all $i \in [1,k]$ we have $x_i - x_{i-1} \in \zero(\lin)$.
\end{claim}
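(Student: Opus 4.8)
The plan is to prove the slightly stronger statement that, for any $S \geq d \cdot M^3$, the set $\{x \in \N^d : \lin(x) = S\}$ is connected under the moves of $\zero(\lin)$, where a single step adds some $z_{ij} := n_j e_i - n_i e_j$ while remaining in $\N^d$; the claim is then the instance $S = \lin(u) = \lin(v)$. Since every such move is reversible ($-z_{ij} = z_{ji} \in \zero(\lin)$), it suffices to connect both $u$ and $v$ to a single common reference vector and concatenate the two walks, reversing the second.

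First I would reduce to a normal form using one fixed coordinate, say the first, as a reservoir: as long as some coordinate $i \geq 2$ has $x_i \geq n_1$, apply $z_{1i} = n_i e_1 - n_1 e_i$, which lowers $x_i$ by $n_1$, raises $x_1$ by $n_i$, and therefore never leaves $\N^d$. This terminates in a vector $\bar x$ with $\bar x_i \in [0,n_1)$ for all $i \geq 2$. A direct estimate gives $\sum_{i \geq 2} n_i \bar x_i \leq d M^2$, so the reservoir satisfies $\bar x_1 = \bigl(S - \sum_{i\geq 2} n_i \bar x_i\bigr)/n_1 \geq (S - dM^2)/M \geq d(M^2 - M)$ by the choice $S \geq d M^3$ — exactly the threshold appearing in Claim~\ref{cl:gcd}. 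It then remains to connect two reservoir forms $\bar u$ and $\bar v$, which differ only on the bounded coordinates $2,\ldots,d$ while the first coordinate holds a large reservoir.

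The algebraic heart is the fact that, because $\lin$ is reduced (so $\gcd(n_1,\ldots,n_d)=1$), the integer kernel lattice $\{w \in \Z^d : \lin(w) = 0\}$ is generated over $\Z$ by $\zero(\lin)$; I would prove this by induction on $d$ using B\'ezout's identity, the same tool underlying Claim~\ref{cl:gcd}. Consequently the difference $w := \bar v - \bar u$, which lies in the kernel, is an integer combination $w = \sum_m c_m z_{i_m j_m}$. Moreover, since $w$ has entries bounded by a function of $d$ and $M$ only (all coordinates $2,\ldots,d$ lie in $[0,n_1)$, and the first is determined), a fixed bounded generating set lets one take the number of terms and the coefficients $c_m$ bounded by a function of $d$ and $M$ alone, independently of $S$.

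The final step realises this abstract combination as a concrete nonnegative walk: I would apply the chosen moves one at a time, and whenever the next move $z_{ij}$ would push a small coordinate $j$ below zero, first top $j$ up by multiples of $n_1$ drawn from the reservoir through $z_{j1}$, which is legal because the reservoir is large. As the number of moves and their magnitudes are bounded independently of $S$, the total amount ever withdrawn from the reservoir is bounded by a function of $d$ and $M$, so $\bar x_1 \geq d(M^2 - M)$ — together with Claim~\ref{cl:gcd}, which certifies that the reservoir can supply and later reabsorb the required quantities — keeps every intermediate configuration in $\N^d$. Prepending $u \to \bar u$ and appending the reversal of $v \to \bar v$ gives the desired sequence from $u$ to $v$. \emph{The main obstacle I expect is the lattice-generation step (establishing that $\zero(\lin)$ spans the whole integer kernel, where $\gcd(n_i)=1$ is essential) together with the bookkeeping needed to turn an abstract combination into a bounded-length nonnegative walk without ever exhausting the reservoir.}
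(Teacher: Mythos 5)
Your overall architecture is sound and genuinely different from the paper's: you normalise both endpoints into a ``reservoir'' form and connect the two normal forms by realising one integer combination of $\zero(\lin)$, whereas the paper argues by induction on $d$, pumping mass into a single coordinate, matching that coordinate exactly via Claim~\ref{cl:gcd}, and recursing on the remaining $d-1$ coordinates. The step you flag as the main obstacle is in fact fine: if $\gcd(n_1,\ldots,n_d)=1$, pick $\lambda\in\Z^d$ with $\sum_j n_j\lambda_j=1$; then for any $w$ with $\lin(w)=0$ one checks directly that $w=\sum_{i,j}\lambda_j w_i\,(n_j e_i-n_i e_j)$, so $\zero(\lin)$ generates the whole integer kernel lattice, with coefficients $\lambda_j w_i$ explicitly bounded in terms of $d$ and $M$.

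The genuine gap is in the last step, where you realise that combination as a nonnegative walk. You assert that the total amount withdrawn from the reservoir is ``bounded by a function of $d$ and $M$'' and conclude that $\bar x_1\geq d(M^2-M)$ suffices, but $d(M^2-M)$ is itself a specific function of $d$ and $M$, and no accounting is given that the withdrawal stays below it. A naive count does not: there are up to $d^2$ moves $z_{ij}$, each with coefficient of magnitude up to $\|\lambda\|_\infty\cdot\|w\|_\infty$ (which can be of order $dM^3$ once Bezout coefficients are bounded), each application shifting coordinates by up to $M$, and each unit of top-up for a small coordinate $j$ costs up to $M$ units of reservoir via $z_{j1}$ --- so the total demand on the reservoir can be of order $dM^5$, far exceeding $d(M^2-M)$. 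Note also that you cannot sidestep this by fixing the small coordinates one at a time against the reservoir alone, since $z_{1i},z_{i1}$ move coordinate $i$ only in multiples of $n_1$ while $|\bar v_i-\bar u_i|<n_1$. As written, your argument therefore proves the claim only for some larger threshold $S\geq\poly(d,M)$, not for the stated $S\geq d\cdot M^3$. (That weaker statement would still suffice for the paper's application in Claim~\ref{cl:ratio2} after adjusting the constants $m,n$ there, but it is not the claim as stated.) The paper avoids this entirely because it only ever uses \emph{nonnegative} combinations supplied by Claim~\ref{cl:gcd} --- it never has to fund an arbitrary signed combination --- which is exactly what keeps the arithmetic within the $dM^3$ budget; to close your version you would need either a careful ordering/scheduling of the moves with a verified reservoir bound, or to restructure along similar one-coordinate-at-a-time lines.
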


The proof of Claim~\ref{cl:zero-run} can be found in the appendix. It uses Claim~\ref{cl:gcd} and a bit
of other rather simple number theory.

\subparagraph*{Modification of a VASS}

For two linear functions $\lin_1, \lin_2 \in \N^d \to \N$ with disjoint supports
and a VASS $V$ with state $q$ we define VASS $V^q_{\lin_1, \lin_2}$
as $V$ with additional transitions whose aim is to be able to increase the ratio $\frac{\lin_1(\cdot)}{\lin_2(\cdot)}$,
but never decrease it.
Let $\lin_1(x_1, \ldots, x_d) = \sum_{i=1}^d n_{i,1} x_i$ and let $\lin_2(x_1, \ldots, x_d) = \sum_{i=1}^d n_{i,2} x_i$.
The set of states of $V^q_{\lin_1, \lin_2}$ is inherited from $V$ similarly to the set of transitions of $V$.
We additionally add to $V$ transitions of the form $(q, v, q)$, which are loops in the state $q \in Q$, of the following form:
\begin{enumerate}
  \item for each coordinate $i \in \supp(\lin_2)$ add a loop which decreases coordinate $i$ by one
  \item for each coordinate $i \not\in \supp(\lin_1) \, \cup \, \supp(\lin_2)$ add two loops: one, which increases
  and one which decreases coordinate $i$ by one
  \item for each $v \in \zero(\lin_1)$ which is nonzero only at $\supp(\lin_1)$
  and similarly for each $v \in \zero(\lin_2)$ which is nonzero only at $\supp(\lin_2)$ add a loop with effect $v$. 
\end{enumerate}

Notice that condition 2 means that we can freely modify in $V^q_{\lin_1, \lin_2}$ the coordinates outside $\supp(\lin_1) \, \cup \, \supp(\lin_2)$
We are ready to state the theorem.

\begin{theorem}\label{thm:advanced}
Let $\lin_1, \lin_2: \N^d \to \N$ be two reduced linear functions with disjoint supports.
Let $V = (Q, T)$ be a $d$-VASS, $q \in Q$ be its state, $s, t \in Q \times \N^d$ be two its configurations
and $R \in \Q$ be a rational number.
Assume that
\begin{enumerate}[(1)]
  \item for each $v \in \N^d$ if $s \reaches q(v)$ then $\lin_1(v) \geq R \cdot \lin_2(v)$,
  \item for each $u, v \in \N^d$ if $q(v) \reaches t+u$ then $\lin_1(v-u) \leq R \cdot \lin_2(v-u)$,
  \item each run from $\state(s)$ to $\state(t)$ traverses through a configuration $c$ with $\state(c) = q$,
  \item the set $\{\proj_I(v) \mid s \reaches q(v) \reaches t\}$ is infinite, where $I = \supp(\lin_1) \cup \supp(\lin_2)$.
\end{enumerate}
Then for any $i \in \supp(\lin_2)$ there is no run from $s$ to $t+e_i$ in $V' = V^q_{\lin_1, \lin_2}$ and each
separator for $(V', s, t+e_i)$ contains a period $p$ such that $\proj_I(p) \neq 0$ and $\lin_1(p) = R \cdot \lin_2(p)$.
\end{theorem}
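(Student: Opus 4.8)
The plan is to mimic the structure of the proof of Theorem~\ref{thm:simple}, but replacing the single line $\alpha$ with the infinite set of configurations guaranteed by condition (4), and replacing the explicit ratio manipulation with the geometry of the linear functions $\lin_1, \lin_2$ and the invariance of $\zero(\lin_1), \zero(\lin_2)$ under the added loops. First I would verify the non-reachability claim that $s \nreaches t+e_i$ in $V'$ for $i \in \supp(\lin_2)$. The added loops of types 1--3 all preserve the property from condition (1): type-2 loops touch only coordinates outside $I$, and by the remark before Claim~\ref{cl:zero-run} the type-3 loops preserve both $\lin_1$ and $\lin_2$; a type-1 loop decreases a coordinate in $\supp(\lin_2)$, hence decreases $\lin_2(v)$ and so only makes $\lin_1(v) \geq R\cdot\lin_2(v)$ easier to maintain. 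Thus any $v$ with $s\reaches_{V'} q(v)$ still satisfies $\lin_1(v)\geq R\cdot\lin_2(v)$, and combining with condition (3) (every $\state(s)$-to-$\state(t)$ run passes through $q$) and condition (2), reaching $t+e_i$ would force $\lin_1(w)\leq R\cdot\lin_2(w)$ where $w$ is obtained from the $q$-configuration after subtracting the surplus $u$; the extra $e_i$ at a $\supp(\lin_2)$ coordinate strictly raises $\lin_2$ of the target, breaking the inequality chain exactly as the $+1$ on $\vr{x}_2$ did in the $4$-VASS application.

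**Extracting the period.** Now fix an arbitrary separator $S$ for $(V', s, t+e_i)$ and let $S_q = \{v \mid q(v)\in S\} = \bigcup_{j} L_j$. By condition (4) the set of $I$-projections of configurations $v$ with $s\reaches q(v)\reaches t$ is infinite; all these $q(v)$ lie in $\post(s)\subseteq S$, so by pigeonhole some linear set $L = b + \N p_1 + \cdots + \N p_k$ contains $q(v)$-vectors whose $I$-projections take infinitely many values. I would then split the periods $p_i$ into those supported inside $I$ (call them $P_0$) and the rest $P_{\neq 0}$, arguing as in Theorem~\ref{thm:simple} that the periods active outside $I$ contribute a bounded part: since the relevant coordinates are pinned by invariants (here, $\lin_1(v)\geq R\lin_2(v)$ together with the structure forcing the complement-of-$I$ coordinates to stay controlled along accepting runs), the $P_0$-periods must carry the infinitely-many $I$-values. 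This localises the whole argument to the coordinates in $I$, where $\lin_1, \lin_2$ are the only relevant measurements since they have disjoint supports covering $I$.

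**The ratio argument.** With the problem reduced to $P_0$, I would establish the analogue of Claim~\ref{cl:ratio}: for every $p\in P_0$ one has $\lin_1(p)\geq R\cdot\lin_2(p)$. The key device is the type-1 and type-3 loops of $V'$. Given a period $p\in P_0$ and a point $b + (\text{something}) = v \in L$ with $s\reaches q(v)$, I would add a suitable multiple of $p$ to reach $v+Np\in L$, hence $q(v+Np)\in S$; if $\lin_1(p) < R\cdot\lin_2(p)$ strictly, then the loops added in $V'$ (which can freely apply any $\zero(\lin_1)$- and $\zero(\lin_2)$-moves and can decrement $\supp(\lin_2)$-coordinates) allow me to push this configuration into $\pre_{V'}(t+e_i)$: decreasing $\lin_2$ relative to $\lin_1$ via type-1 loops, and reshuffling within each support via type-3 loops to realise the value drop guaranteed by Claim~\ref{cl:zero-run} once the coordinates are large enough (the bound $d\cdot M^3$ there is absorbed since the $I$-values are unbounded). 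This produces a point in $S\cap\pre(t+e_i)$, contradicting the separator property. The reverse strict inequality $\lin_1(p) > R\cdot\lin_2(p)$ for \emph{all} $p\in P_0$ is then excluded by the same boundedness/growth computation as in Theorem~\ref{thm:simple}: if every period had $\lin_1(p)/\lin_2(p)$ strictly above $R$, then $\lin_1(v+Np) - R\lin_2(v+Np)$ would grow linearly in $N$, yet the surplus must stay bounded by a quantity independent of $N$, forcing $N$ bounded and contradicting that the $I$-projections are infinite. Hence some $p\in P_0$ satisfies $\lin_1(p) = R\cdot\lin_2(p)$ with $\proj_I(p)\neq 0$.

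**Main obstacle.** The hard part will be the reachability manipulation in the second and third paragraphs: in Theorem~\ref{thm:simple} the vector $\Delta[1]\cdot p$ landed in $\N\Delta + \N_+ e_2$ by an elementary computation, but here the corresponding step must convert a strict ratio inequality into an actual run in $V'$ reaching $t+e_i$, and this requires carefully combining the three families of added loops with Claim~\ref{cl:zero-run} to realise the needed within-support value transfers, while keeping all coordinates nonnegative and large enough to apply the $d\cdot M^3$ threshold. Getting that reduction airtight — in particular ensuring the surplus vector $u$ in condition (2) is matched correctly and that the $\zero(\lin)$-spanning moves are available along the constructed run — is where the bulk of the technical care will lie.
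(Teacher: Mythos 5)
There is a genuine gap, and it sits exactly where you flagged the ``main obstacle''. The paper's proof does not work with the unstructured infinite family from condition (4) directly: it first converts it into a \emph{line}. Using that $\unlhd$ (and its source-anchored variant) is a wqo on runs, it finds indices $i<j$ with $\rho^1_i \unlhd \rho^1_j$ and $\rho^2_i \unlhd' \rho^2_j$, and by Corollary~\ref{corr:pumping} obtains $a = v_i$ and $\Delta = v_j - v_i$ with $\proj_I(\Delta)\neq 0$ such that $s \reaches q(a+n\Delta) \reaches t$ for \emph{every} $n\in\N$; conditions (1) and (2) then give $\lin_1(\Delta) = R\cdot\lin_2(\Delta)$. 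This $\Delta$ is indispensable. All of the added loops in $V'$ preserve $\lin_1$ (type-1 and type-3 loops on $\supp(\lin_2)$ and type-2 loops do not touch $\supp(\lin_1)$, and type-3 loops on $\supp(\lin_1)$ lie in $\zero(\lin_1)$), so to push a point $q(v+Np)\in S$ into $\pre_{V'}(t+e_i)$ you must land \emph{exactly} on a configuration $q(w+e_i)$ with $q(w)\reaches t$ and $\lin_1(v+Np)=\lin_1(w)$. With only a discrete, unstructured family of co-reachable targets there is no way to arrange this equality; with the one-parameter family $a+n\Delta$ the paper matches levels by choosing $m = dM^3\lin_1(\Delta)$ and $n = dM^3\lin_1(p)$, then uses Claim~\ref{cl:zero-run} within each support and Claim~\ref{cl:gcd} to absorb the $\lin_2$-surplus via the decrement loops. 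Your sketch of the ratio argument never produces a target whose $\lin_1$-value you can hit, so the contradiction cannot be completed as described.

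A secondary problem is your second paragraph: you split periods into those supported inside $I$ and the rest, and claim the latter ``contribute a bounded part'' because the complement-of-$I$ coordinates are ``pinned by invariants''. Nothing in hypotheses (1)--(4) bounds the coordinates outside $I$ on the reachable-and-co-reachable set, so this step is unjustified. It is also unnecessary: the paper splits periods by $\proj_I(p)=0$ versus $\proj_I(p)\neq 0$, and the former satisfy $\lin_1(p)=\lin_2(p)=0$ and simply vanish when $\lin_1,\lin_2$ are applied, so no boundedness argument is needed (unlike in Theorem~\ref{thm:simple}). Your final growth argument and the non-reachability part are essentially correct, but the proof cannot be repaired without first extracting the direction $\Delta$ via the wqo/pumping step.
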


\begin{proof}
Due to condition (4) in the theorem statement there exists an infinite sequence of vectors $v_i \in \N^d$ with $\proj_I(v_i)$ pairwise different
such that $s \reaches q(v_i) \reaches t$. Let $\rho^1_i$ be the corresponding runs from $s$ to $q(v_i)$
and $\rho^2_i$ be the corresponding runs from $q(v_i)$ to $t$.
Recall that $\unlhd$ is a well-quasi order and its modified version (denoted here $\unlhd'$) with comparison on sources instead of targets
is also a well-quasi order.
Therefore there exist $i < j$ such that $\rho^1_i \unlhd \rho^1_j$ and $\rho^2_i \unlhd' \rho^2_j$.
Let $\Delta = v_j - v_i \in \N^d$. Clearly $\proj_I(\Delta) \neq 0$, as $\proj_I(v_i) \neq \proj_I(v_j)$. Let $a = v_i$.
By Corollary~\ref{corr:pumping} we get that for any $n \in \N$ there is a run of $V$ from $s$ to $a + n\Delta$.
An analogous reasoning with targets changed to sources shows that for any $n \in \N$ there is a run in $V$
from $a + n\Delta$ to $t$. Therefore for any $n \in \N$ we have $s \reaches q(a + n\Delta) \reaches t$.
By conditions (1) and (2) in the theorem statement we get that $\lin_1(a + n \Delta) = R \cdot \lin_2(a + n \Delta)$
for any $n \in \N$. In consequence $\lin_1(\Delta) = R \cdot \lin_2(\Delta)$. Notice that both $\lin_1(\Delta), \lin_2(\Delta) > 0$,
because $\proj_I(\Delta) \neq 0$.

We first show that $s \nreaches t+e_i$ in $V'$. 
Assume towards a contradiction that $s \reaches t+e_i$. By condition (3) we know
that $s \reaches q(v) \reaches t+e_i$ for some $v \in \N^d$.
Observe first that conditions (1) and (2) still hold for the reachability relation defined in $V'$, as the added loops in state $q$
do not invalidate them (for that purpose we demand in point 3. in the construction of $V'$
that vector $v$ is nonzero only on coordinates in $\supp(\lin_1)$ or only on coordinates in $\supp(\lin_2)$).
Notice that by condition (2) setting $u = e_i$ we have
\[
\lin_1(v) = \lin_1(v - e_i) \leq R \cdot \lin_2(v - e_i) < R \cdot \lin_2(v).
\]
where the first equation follows from the fact that $\lin_1(e_i) = 0$.
On the other hand $s \reaches q(v)$ so by condition (1) we have $\lin_1(v) \geq R \cdot \lin_2(v)$, which is in contradiction
with the above inequality. Thus indeed $s \nreaches t+e_i$.

Now we show that every separator for $(V', s, t+e_i)$ contains an appropriate period $p$.
This proof is quite similar to the proof of Theorem~\ref{thm:simple}, but we need to deal with some more technicalities.
Consider a separator $S = \bigcup_{q \in Q} q(S_q)$ for $(V', s, t+e_i)$.
Clearly $S_q = \bigcup_{j \in J} L_j$, where $L_j$ are linear sets, needs to contain all the vectors $a + n \Delta$ for $n \in \N$.
Let $L$ be one of the finitely many linear sets $L_j$, which contains infinitely many vectors among $\{a + n \Delta \mid n \in \N\}$.
Let $L = b + \N p_1 + \ldots + \N p_k$ and $P$ be the set of periods $\{p_1, \ldots, p_k\}$.
We aim at showing that $L$ contains a period $p$ such that $\proj_I(p) \neq 0$ and $\lin_1(p) = R \cdot \lin_2(p)$,
namely $\lin_1(p) \cdot \lin_2(\Delta) = \lin_2(p) \cdot \lin_1(\Delta)$, as $R = \frac{\lin_1(\Delta)}{\lin_2(\Delta)}$.

We first prove a claim analogous to Claim~\ref{cl:ratio}.
This one is however much more challenging to prove. For its purpose we have defined $V'$ so intricately
with the additional loops.

\begin{claim}\label{cl:ratio2}
For each period $p \in P$ we have $\lin_1(\Delta) \cdot \lin_2(p) \leq \lin_2(\Delta) \cdot \lin_1(p)$.
\end{claim}

Claim~\ref{cl:ratio2} is proven in the appendix, it uses the nontrivial Claim~\ref{cl:zero-run}.

Now we follow the lines of the proof of Claim~\ref{cl:ratio}.
Let $P_\emptyset$ be the set of all periods in $P$ for which $\proj_I(p) = \emptyset$
and $P_\nemp = P \setminus P_\emptyset$.
Recall that we want to show existence of a period $p \in P_\nemp$ fulfilling $\lin_1(p) \cdot \lin_2(\Delta) = \lin_2(p) \cdot \lin_1(\Delta)$.
Assume towards a contradiction that there is no such period.
Therefore by Claim~\ref{cl:ratio2} each period $p \in P_\nemp$ satisfies $\lin_1(\Delta) \cdot \lin_2(p) < \lin_2(\Delta) \cdot \lin_1(p)$.
In particular for each period $p \in P_\nemp$ we have $\lin_1(p) > 0$,
thus we can equivalently write that for all $p \in P_\nemp$ it holds
\[
\frac{\lin_2(p)}{\lin_1(p)} < \frac{\lin_2(\Delta)}{\lin_1(\Delta)}.
\]
Let $F$ be the maximal value of $\frac{\lin_2(p)}{\lin_1(p)}$ for $p \in P_\nemp$,
clearly $\frac{\lin_2(\Delta)}{\lin_1(\Delta)} > F$, so
\begin{equation}\label{eq:f2}
\lin_2(\Delta) > F \cdot \lin_1(\Delta).
\end{equation}
Recall now that for arbitrary big $n$ we have that $a + n\Delta \in b + \N p_1 + \ldots + \N p_k$,
thus $(a - b) + n\Delta \in \N p_1 + \ldots + \N p_k$. Let $v = a - b$.
We have then that
\[
(\lin_1(v+n\Delta), \lin_2(v+n\Delta)) = \sum_{p_i \in P_\nemp} n_i (\lin_1(p), \lin_2(p)).
\]
By the above we know that
\[
\frac{\lin_2(v+n\Delta)}{\lin_1(v+n\Delta)} \leq F,
\]
as $v+n\Delta$ is a positive linear combination of periods from $P$,
for each $p \in P_\emptyset$ we have $\lin_1(p) = \lin_2(p) = 0$ and for each $p \in P_\nemp$
we have $\frac{\lin_2(p)}{\lin_1(p)} \leq F$.
Therefore
\[
\lin_2(v) + n \lin_2(\Delta) \leq F(\lin_1(v) + n \lin_1(\Delta))
\]
and equivalently
\[
n (\lin_2(\Delta) - F \cdot \lin_1(\Delta)) \leq F \cdot \lin_1(v) - \lin_2(v).
\]
By~\eqref{eq:f2} we have that $\lin_2(\Delta) - F \cdot \lin_1(\Delta) > 0$ therefore
\[
n \leq \frac{F \cdot \lin_1(v) - \lin_2(v)}{\lin_2(\Delta) - F \cdot \lin_1(\Delta)}.
\]
This is in contradiction with the fact that $n$ can be arbitrarily big
and finishes the proof.
\end{proof}


\paragraph*{Acknowledgements}
We thank S{\l}awomir Lasota and Michał Pilipczuk for inspiring discussions.

\bibliographystyle{plain}
\bibliography{citat}

\appendix

\section{Missing proof}

\begin{proof}[Proof of Claim~\ref{cl:zero-run}]
We prove the claim by induction on $d$. For $d = 1$ clearly $u = v$ and there is nothing to show.
Let us assume that the claim holds for $d-1$, our aim is to prove it for $d$.
Adding a vector $y \in \zero(\lin)$ to $x_{i-1}$ in order to obtain $x_i = x_{i-1} + y \in \N^d$ we call a \emph{step}.
Clearly it is enough find a sequence of steps from $u$ to $v$.
The plan is to apply first such a sequence of steps from $u$ to some $u'$ such that $u'[i] = v[i]$ for some $i \in [1,d]$
and then show by induction assumption that a sequence of steps from $u'$ to $v$ exists as well.

For a subset of coordinates $I \subseteq [1,d]$ and $x \in \N^d$ by $\lin_I(x)$ we denote $\sum_{i \in I} n_i x[i]$.
As $\lin(v) \geq d \cdot M^3$ there exists some $j \in [1,d]$ such that $\lin_{[1,d] \setminus \{j\}}(v) \geq (d-1) M^3$.
Assume wlog. that $j = d$, so
\begin{equation}\label{eq:linv}
\lin_{[1,d-1]}(v) \geq (d-1) M^3.
\end{equation}
We first aim to reach $u''$ such that $u''[d] - v[d] \geq M^2$. Clearly until for some $i \neq d$
we have $u[i] \geq M$ we can apply the step $n_i e_d - n_d e_i$ to $u$ and increase value of $u[d]$. We continue this
until we reach some $u''$ with $\sum_{i=1}^{d-1} u''[i] < (d-1) M$.
This is indeed possible as $\sum_{i=1}^{d-1} u''[i] \geq (d-1) M$ implies that for some $i \neq d$ we have $u''[i] \geq M$. 
Then we have that $\lin_{[1,d-1]}(u'') < (d-1) M^2$ as $M = \max_{i \in [1,d]} n_i$, so $n_d \cdot u''[d] > dM^3 - (d-1)M^2$.
By~\eqref{eq:linv} we have $n_d \cdot v[d] \leq dM^3 - (d-1) M^3 = M^3$.
Therefore $n_d (u''[d] - v[d]) > (d-1) (M^3 - M^2)$ and thus $u''[d] - v[d] \geq (d-1) (M^2 - M)$.
By Claim~\ref{cl:gcd} we have that $u''[d] - v[d] = \sum_{i=1}^{d-1} n_i b_i$ for some $b_i \in \N$
(notice that here we use the fact that $\lin$ is reduced and therefore $\gcd(n_1, \ldots, n_{d-1}) = 1$
and divides $u''[d] - v[d]$). Therefore in order to obtain $u'$ such that $u'[d] = v[d]$ we apply for each $i \in [1,d-1]$
exactly $b_i$ number of times the step $n_d e_i - n_i e_d$ to $u''$.
Notice that all the other coordinates beside the $d$-th one increase, so these steps lead
to vectors with nonnegative coordinates. As $\lin_{i \in [1,d-1]}(u') = \lin_{i \in [1,d-1]}(v) \geq (d-1) M^3$ we apply
the induction assumption to show that indeed starting from $u'$ one can reach $v$ by a sequence of steps. This finishes the proof.
\end{proof}

\begin{proof}[Proof of Claim~\ref{cl:ratio2}]
Similarly as in Claim~\ref{cl:ratio} we intuitively mean to show that for all $p \in P$ we have
$\frac{\lin_1(\Delta)}{\lin_2(\Delta)} \leq \frac{\lin_1(p)}{\lin_2(p)}$, but this is not a formally correct
statement as it may be that $\lin_2(p) = 0$.
Assume towards a contradiction that there is a period $p$ such
that $\lin_1(\Delta) \cdot \lin_2(p) > \lin_2(\Delta) \cdot \lin_1(p)$.
Clearly $a \in L$, therefore also $a + mp \in L$ for any $m \in \N$.
We aim at showing that $a + mp \reaches a + n\Delta + e_i$ in $V'$ for some $m, n \in \N$.
This would lead to a contradiction as we know that $a + n\Delta \reaches t$, so also
$a + n\Delta + e_i \reaches t+e_i$. Therefore we would have that $a + mp \in L$ and
also $a + mp \reaches a + n\Delta + e_i \reaches t + e_i$, which is a contradiction
with the definition of the separator.

Let $\lin_2(x_1, \ldots, x_d) = \sum_{i=1}^d n_{i,2} x_i$ and let $M$ be the maximal coefficient in $\lin_2$,
namely $M = \max_{i \in [1,d]} n_{i,2}$. We set $m = dM^3 \cdot \lin_1(\Delta)$ and $n = dM^3 \cdot \lin_1(p)$.
Observe now that $\lin_1(a + mp) = \lin_1(a + n\Delta + e_i)$.
Indeed
\begin{align*}
\lin_1(a + mp) & = \lin_1(v) + m \cdot \lin_1(p) = \lin_1(v) + dM^3 \cdot \lin_1(\Delta) \cdot \lin_1(p) \\
& = \lin_1(v) + n \cdot \lin_1(\Delta) = \lin_1(a + n\Delta + e_i),
\end{align*}
as $i \not\in \supp(\lin_1)$, so $\lin_1(e_i) = 0$.
Let $p = p_1 + p_2 + p_{\trash}$ and $\Delta = \Delta_1 + \Delta_2 + \Delta_{\trash}$,
where $p_1, \Delta_1$ are positive only on $\supp(\lin_1)$,
$p_2, \Delta_2$ are positive only on $\supp(\lin_2)$
and $p_\trash, \Delta_\trash$ are positive only outside $\supp(\lin_1) \, \cup \, \supp(\lin_2)$.
First notice that thanks to transitions in $V'$, which can freely modify coordinates outside $\supp(\lin_1) \, \cup \, \supp(\lin_2)$
we can assume wlog. that $p_\trash = \Delta_\trash = 0$.
Therefore by Claim~\ref{cl:zero-run} and because $\lin_1$ is reduced there is a run in $V'$ from $a + mp = a + mp_1 + mp_2$
to $a + n \Delta_1 + mp_2$, as $\lin_1(a + mp_1) = \lin_1(a + n \Delta_1) \geq dM^3 \geq |\supp(\lin_1)| \cdot M^3$.
We claim now that
\begin{equation}\label{eq:needed}
\lin_2(mp_2) = \lin_2(n \Delta_2 + e_i + v_\trash) \geq dM^3 \geq |\supp(\lin_2)| \cdot M^3
\end{equation}
for some $v_\trash$ positive only on $\supp(\lin_2)$.
This would finalise the argument as then
\[
a + n \Delta_1 + mp_2 \reaches a + n \Delta_1 + n \Delta_2 + e_i + v_\trash \reaches a + n \Delta_1 + n \Delta_2 + e_i = a + n \Delta + e_i,
\]
where the first relation $\reaches$ follows from Claim~\ref{cl:zero-run} and fact that $\lin_2$ is reduced
and the second one follows from existence of transitions in $V'$ which arbitrarily decrease any coordinate in $\supp(\lin_2)$.

Recall first that $\lin_1(\Delta) \cdot \lin_2(p) > \lin_2(\Delta) \cdot \lin_1(p)$, so 
\[
\lin_1(\Delta) \cdot \lin_2(p) - \lin_2(\Delta) \cdot \lin_1(p) \geq 1
\]
and that we set $m = dM^3 \cdot \lin_1(\Delta)$ and $n = dM^3 \cdot \lin_1(p)$.
Therefore
\[
\lin_2(m p_2) - \lin_2(n \Delta_2) = dM^3 \cdot \lin_1(\Delta) \cdot \lin_2(p) - dM^3 \cdot \lin_2(\Delta) \cdot \lin_1(p) \geq dM^3.
\]
So $\lin_2(m p_2) - \lin_2(n \Delta_2 + e_i) \geq dM^3 - M \geq d(M^2 - M)$. Thus by Claim~\ref{cl:gcd}
there exist some $b_1, \ldots, b_d \in \N$ such that $\sum_{j=1}^d n_{j,2} b_j = \lin_2(m p_2) - \lin_2(n \Delta_2 + e_i)$.
Hence $\lin_2(m p_2) = \lin_2(n \Delta_2 + e_i + \sum_{j=1}^n b_j e_j) \geq dM^3$
and we can set $v_\trash =  \sum_{j=1}^n b_j e_j$ thus satisfying~\eqref{eq:needed}
and finishing the proof of the claim.
\end{proof}

\end{document}